\theoremstyle{plain}
\newtheorem{theorem}{Theorem}
\newtheorem{proposition}[theorem]{Proposition}
\newtheorem{corollary}[theorem]{Corollary}
\theoremstyle{definition}
\newtheorem{example}[theorem]{Example}
\newtheorem{remark}[theorem]{Remark}
\numberwithin{equation}{section}
\numberwithin{theorem}{section}
\DeclareMathOperator{\diag}{diag}
\newcommand{\D}{\,\mathrm{d}}
\newcommand{\bs}[1]{{\boldsymbol{#1}}}
\newcommand{\R}{\mathbf{R}}
\newcommand{\Ham}[1]{H_{i}}
\begin{document}
\title{{On the behavior of the leading eigenvalue of Eigen's evolutionary matrices}}

\author{Yuri S. Semenov$^{1}$, Alexander S. Bratus$^{1,2}$, Artem S. Novozhilov$^{{3},}$\footnote{Corresponding author: artem.novozhilov@ndsu.edu} \\[3mm]
\textit{\normalsize $^\textrm{\emph{1}}$Applied Mathematics--1, Moscow State University of Railway Engineering,}\\[-1mm]\textit{\normalsize Moscow 127994, Russia}\\[2mm]
\textit{\normalsize $^\textrm{\emph{2}}$Faculty of Computational Mathematics and Cybernetics,}\\[-1mm]
\textit{\normalsize Lomonosov Moscow State University, Moscow 119992, Russia}\\[2mm]
\textit{\normalsize $^\textrm{\emph{3}}$Department of Mathematics, North Dakota State University, Fargo, ND 58108, USA}}

\date{}

\maketitle

\begin{abstract}
We study general properties of the leading eigenvalue $\overline{w}(q)$ of Eigen's evolutionary matrices depending on the probability $q$ of faithful reproduction. This is a linear algebra problem that has various applications in theoretical biology, including such diverse fields as the origin of life, evolution of cancer progression, and virus evolution. We present the exact expressions for $\overline{w}(q),\overline{w}'(q),\overline{w}''(q)$ for $q=0,0.5,1$ and prove that the absolute minimum of $\overline{w}(q)$, which always exists, belongs to the interval $[0,0.5]$. For the specific case of a single peaked landscape we also find lower and upper bounds on $\overline{w}(q)$, which are used to estimate the critical mutation rate, after which the distribution of the types of individuals in the population becomes almost uniform. This estimate is used as a starting point to conjecture another estimate, valid for any fitness landscape, and which is checked by numerical calculations. The last estimate stresses the fact that the inverse dependence of the critical mutation rate on the sequence length is not a generally valid fact. Therefore, the discussions of the error threshold applied to biological systems must take this fact into account.

\paragraph{\small Keywords:} Quasispecies model, Eigen model, error threshold, single peaked landscape, dominant eigenvalue
\paragraph{\small AMS Subject Classification:} Primary:  92D15; 92D25; Secondary: 15A18
\end{abstract}

%
%
%
\section{Introduction}
Eigen's quasispecies mathematical model was formulated in 1971 by Manfred Eigen \cite{eigen1971sma} and further expanded by Eigen, his co-authors, and many others (a comprehensive review of what was done by 1990 can be found in \cite{eigen1989mcc}, a review of the main developments in the quasispecies theory in the 90es of the 20th century is \cite{baake1999}, and more recent advances are discussed at length in \cite{jainkrug2007}). This model was originally put forward to facilitate the discussion on various issues concerning the origin of life, and modeled evolution of a population of early replicators (which, e.g., chemically can be thought of as proto RNA-type molecules) subject to the evolutionary forces of selection, which is a consequence of differential reproduction rates, and mutation, which is a consequence of unfair reproduction. Since the early replicators are expected to produce a lot of mutants, therefore it is important to emphasize that the suggested framework can be used for high mutation rates. This last observation made the mathematical framework of the Eigen model one of the main modeling approaches to virus evolution (e.g., \cite{Domingo2012,Lauring2010}). Finally, somewhat late it was realized (see, e.g., \cite{baake1999} for a discussion) that the Eigen model is essentially equivalent to a classical selection--mutation model in a multi-allele haploid population, which usually served as a starting point for the founding fathers of the mathematical genetics, Fisher, Haldane, and Wright, to tackle more realistic, and consequently more mathematically involved, problems (the mathematical theory of the mutation--selection balance is dealt with in, e.g., \cite{burger2000mathematical}). All in all, Eigen's mathematical model must be considered today as one of the main theoretical points of interaction between biology and mathematics, along with other classical examples such as the Lotka--Volterra predator--prey model, the FitzHugh--Nagumo model of spike generation in an axon, or the Fisher--Kolmogorov--Petrovskii--Piskunov model for the spread of an advantageous gene, to mention a few. It is quite difficult to access today its relevance to biology from a quantitative point of view (see, e.g., \cite{Holmes2010}), but it is also impossible to deny its influence on biological theories related to such diverse fields, e.g., as origin of life, virus evolution, evolution of cancer, HIV infection, or evolution of altruism. The influence of the model primarily appears through the use of the observed in the mathematical model phenomena as the metaphors for in vitro or in vivo observed or predicted properties of various living systems.

Two such main metaphors that had risen by the Eigen model are the notions of the \textit{quasispecies} and the \textit{error threshold}. We stress at this point that both of them are somewhat poetically named phenomena that were observed and studied initially within the realm of a mathematical model, and not vice versa. The quasispecies, according to Eigen and co-authors, is a cloud of mutants around the fittest genotype (mathematically speaking, a probability distribution of different observed genotypes in the population). It is this very cloud, how can be elementary proved, is the object of selection that competes with other admissible under the given mutation scheme probability distributions. On the biological point of view there are different opinions whether the quasispecies concept is relevant to, e.g., RNA viruses (see, e.g., \cite{Domingo2012,Holmes2010}), this will not concern us here as far as the quasispecies concept is of significant general interest to many evolutionary processes and many evolutionary biologists.

Mathematically, though, the quasispecies is simply a specific eigenvector of an evolutionary matrix that unites selection and mutation processes and that corresponds to the maximal real positive eigenvalue, which turns out to be equal to the mean population fitness. The \textit{existence} of the quasispecies eigenvector is an elementary consequence of the Perron--Frobenius theorem for nonnegative matrices (e.g., \cite{laub2005matrix}). The Perron--Frobenius theorem, however, does not say anything about the \textit{exact form} of this eigenvector, which is biologically of significant interest; for instance, of particular interest is how much this distribution localized around the fittest genotype. It turns out (and we discuss it more below) that these are exceptional cases when the quasispecies distribution can be written down explicitly for general selection and mutation schemes. Therefore, it is a usual practice either to make some simplifying assumptions about the selection and mutation processes, or resort to numerical computations, which are possible only for very short molecules. The only general case of the Eigen model when, at least in principle, it is feasible to present an explicit solution for the quasispecies distribution corresponds to a multiplicative fitness landscape (i.e., when the fitness contributions of different nucleotides in a molecule are independent). We outline this solution (as first obtained in \cite{rumschitzki1987spectral}) in Appendix \ref{append:A}. We reiterate here that the term ``quasispecies'' in theoretical evolutionary works very often refers to a mathematical object, about which we know for sure that it exists, but in most cases we do not know the exact form of this object.

The second metaphor, the error threshold, that was introduces by Eigen, is more difficult to discuss. First of all, there exists no universally accepted mathematical definition of the error threshold. The verbal descriptions in biologically oriented works can pass the message of the error threshold meaning, but of no help if the question is to provide a mathematically unambiguous definition, which can be used in numerical or analytical computations.
Another important side of the story is that the notion of the error threshold, how it is generally accepted, was formed by numerical computations and quite stunning figures in \cite{swetina1982self}, where no definition is required to actually see the error threshold in figures (now this Fig. 10 in \cite{swetina1982self} became in a sense classical, one example is \cite{szathmary1997rrf} and another representative example is \cite{Domingo2012}, where similar figures are presented as \textit{the} behavior of the solutions to the Eigen model). Together with the impressive figures an estimate of the critical mutation rate was present and compared with the numerical computations, showing a very good accuracy. What is usually missing in the discussions of this work in biologically oriented texts that both the figure and the estimate are specific to the situation when there is one fittest genotype, and everyone else in the population has a lower but the same fitness, this is so-called \textit{single or sharply peaked landscape}.

Our main conclusion from the last three paragraphs is as follows. The classical status of the Eigen quasispecies mathematical model notwithstanding, there is a very serious distinction from other classical mathematical models, e.g., those mentioned above. Whereas we possess an exhaustive mathematical understanding of many mathematical models that are used, among other things, as metaphors in biology, our mathematical understanding of the Eigen model is still somewhat limited. First of all this fact has to be clearly understood when the language and the results of this model are applied for a particular biological system. And second this prompts for more careful and better understanding of mathematical peculiarities of the Eigen model. This second goal is exactly what we try to (partially) achieve in the present manuscript.

There are a great deal of insightful and deep results concerning the mathematical properties of the Eigen model. Several approaches used to derive these results included initially some perturbation technics and numerical analysis \cite{eigen1971sma,eigen1989mcc,eigen1988mqs,schuster1988stationary,swetina1982self}. A special structure of the Eigen evolution matrix and its relevance to the solution of the mathematical model were also early noted (e.g., \cite{dress1988evolution,rumschitzki1987spectral,thompson1974eigen}), and it was also noted that explicit solutions are possible only in very special cases. In 1986 paper \cite{leuthausser1986exact} was published, in which an exact  correspondence between the Ising model of statistical physics and the Eigen evolutionary model was established. This correspondence actually points to the fact that our limited knowledge of the Eigen quasispecies model is due to its intrinsic complexity, because for much longer studied Ising model also there exists no general exact solution. As a results a significant number of papers were published that employed well developed methods of statistical physics to the problem of biological evolution (e.g., \cite{baake1997ising,baake2001mutation,galluccio1997exact,leuthausser1987statistical,saakian2006ese} to mention just a few), we especially would like to single out work by Baake and Wagner \cite{baake2001mutation}, in which not only the results are presented, but also many details are explained for the reader not familiar with statistical physics jargon. In \cite{Baake2007,Hermisson2002} a maximum principle is formulated for a wide class of models, that include the classical Eigen's quasispecies model. While this maximum principle provides a very powerful tool for analysis of the Eigen quasispecies model, it still requires several simplifying assumptions, in particular, permutation invariant structure of the fitness landscape, and smooth approximation of the fitness functions involved in the limit of infinite genome length. Our goal in this text is to present exact results for finite dimensional problems, without resorting to any limit procedures and scalings.

Our approach is based on analysis of the spectral properties of the Eigen evolutionary matrix (similar to what was done in \cite{garcia2002linear,rumschitzki1987spectral}, and what we applied for the analysis of another incarnation of the Eigen model --- permutation invariant Crow--Kimura model in \cite{bratus2013linear}). We attempt to use the special structure of the matrices to obtain an insight about qualitative behavior of several important quantities, the central of which is the mean population fitness or, mathematically, the leading eigenvalue. The results that form the core of the present manuscript include analytical analysis of the first and second derivative of the mean population fitness $\overline{w}(q)$ as a function of the mutation probability $1-q$. In particular, we prove that this function always possesses its absolute minimum for $0<q<0.5$. The exact expressions for $\overline{w}(q),\overline{w}'(q),\overline{w}''(q)$ are presented for $q=0,0.5,1$. The single peaked landscape is used as an example to apply the general analytical results for an arbitrary fitness landscape, and upper and lower bounds for $\overline{w}(q)$ are given for any $0\leq q\leq 1$. We show that the estimate of the critical mutation probability as given by Eigen and many others is always an upper estimate for an arbitrary fitness landscape, which can be very inaccurate in general. These exact mathematical results can be used to obtain a better intuitive understanding of the behavior of the mean population fitness for a general Eigen evolution matrix. We also compare a heuristic estimate, obtained under condition of the uniform distribution of the quasispecies, of the critical mutation probability with numerical computations.

The rest of the text is organized as follows. In Section \ref{sec:2} we introduce the notations and state the mathematical problem. There are a lot of mathematical models in the literature that go under the umbrella of the Eigen model. These models may include, in addition to the selection and mutation processes, recombination, non-constant sequence length, duplications and deletions, horizontal gene transfer, and other possible generalizations. We do not consider those. In Section \ref{sec:2} we formulate a simplest Eigen model, which still possess the salient features important for biological applications. In Section \ref{sec:3} we state the main results, illustrate them, and discuss possible biological interpretations; no proofs are given here to streamline the exposition. Sections \ref{sec:4} and \ref{sec:5} are devoted to a detailed presentation of the proofs of the results discussed in Section \ref{sec:3}. Finally some additional mathematical results are presented in Appendix.

\section{Problem statement and notations}\label{sec:2}
Consider an infinite population of sequences of fixed \textit{length} $N$. Each sequence is composed of the letters from the alphabet $\{0,1\}$, therefore the total number of different \textit{types} of sequences is $l:=2^N$. There are different possible ways to index all the sequence types in the population, we choose the one when sequence type $i$ is exactly the binary representation of the number $i$:
$$
\text{Sequence type }i=[a_0,a_1,\ldots,a_{N-1}]\Longleftrightarrow i=a_0+a_1 2+\ldots+ a_{N-1}2^{N-1},\,a_i\in\{0,1\},
$$
note the order of letters $a_i$ in the sequence. For example, sequence $\mbox{type }0=[0,0,\ldots,0]$, and sequence type $l-1=2^N-1$ is the sequence of all 1s: $\mbox{type }l-1=[1,1,\ldots,1]$. Using this lexicographical order of sequence types, we use the index variables that run usually from 0 to $l-1=2^N-1$ or from $0$ to $N-1$.

Let $p_i(t)$ denote the frequency of the sequences of type $i$, and $\bs p(t)=\bigl(p_0(t),\ldots,p_{l-1}(t)\bigr)^\top$, where $^\top$ means transposition and bold letters denote matrices and vectors, which are always assumed to be column-vectors. There are two basic evolutionary processes in the population, namely, the \textit{selection process}, which is modeled by considering differential reproduction rates, or \textit{fitnesses}, $w_i>0,\,i=0,\ldots l-1$, of the sequence types, which we put together as the diagonal matrix $\bs W=\diag(w_0,\ldots,w_{l-1})\in \R^{l\times l}$ or as the vector $\bs w=(w_0,\ldots,w_{l-1})^\top\in\R^l$; both $\bs W$ and $\bs w$ are called the \textit{fitness landscape}. The second evolutionary process in the population is the \textit{mutation process}, which is described by the probabilities of producing offspring of type $i$ upon a reproduction event of a sequence of type $j$: $q_{ij}=\mbox{Pr}\,(\text{sequence }j\to \text{sequence }i)$. The mutation matrix hence is $\bs Q=(q_{ij})_{l\times l}$, in the following we also use the notation $\bs Q_N$ for the matrix of dimension $2^N\times 2^N$. Putting together the assumptions on the selection and mutation processes, which are coupled in the original Eigen model, we obtain a system of ordinary differential equations
(see, e.g., \cite{eigen1971sma,eigen1989mcc,eigen1988mqs} for the original work, or \cite{baake1999,jainkrug2007} for recent reviews)
\begin{equation}\label{eq0:1}
    \bs{\dot p}(t)=\bs{QWp}(t)-\overline{w}(t)\bs p(t),
\end{equation}
where the term
$$
\overline{w}(t)=\sum_{i=0}^{l-1} w_ip_i(t)=\bs w\cdot \bs p(t),
$$
which describes the \textit{mean population fitness}, is necessary to keep the condition $\sum_{i=0}^{l-1}p_i(t)=1$ for any $t$. Everywhere in the text the dot denotes the standard inner product in $\R^l$: $\bs x\cdot \bs y=\sum_{i=0}^{l-1}x_iy_i$ for any two vectors $\bs x,\bs y\in \R^l$.

To further define the main object of our analysis, we can assume that the mutation probabilities are independent of the position in the sequence and other possible mutations. Therefore, if $q\in[0,1]$ denotes the probability of \textit{error-free} reproduction per one site of a sequence per replication event, then
$$
q_{ij}=q^{N-H_{ij}}(1-q)^{H_{ij}},
$$
where $H_{ij}$ is the \textit{Hamming distance} between sequences $i$ and $j$, which is defined as the number of sequence sites at which sequences $i$ and $j$ are different.

Model \eqref{eq0:1} is essentially linear, and its behavior is determined by the leading eigenvalue and the corresponding eigenvector of matrix $\bs{QW}$, which we call the \textit{Eigen evolutionary matrix}. To be precise, if $\bs{QW}$ is \textit{irreducible} (recall that $\bs A=(a_{ij})_{n\times n}$ is irreducible if an associated with $\bs A$ directed graph with $n$ vertices, for which there is an edge from vertex $j$ to vertex $i$ if and only if $a_{ij}>0$, is strongly connected, i.e., from any vertex $i$ there is a path to any vertex $j$), then there is a unique globally stable equilibrium of the dynamical system \eqref{eq0:1}:
$$
\lim_{t\to\infty} \bs p(t)=\bs p,
$$
such that $\sum_{i=0}^{l-1}p_i=1$ and $\bs p>0$ for any positive initial conditions.

This equilibrium $\bs p$ was called by Manfred Eigen and his co-authors the \textit{quasispecies}. Quasispecies $\bs p$ can be found as the right eigenvector of matrix $\bs{QW}$ corresponding to the leading (Perron--Frobenius) eigenvalue, which is given by $\overline{w}=\bs{Wp}=\bs w\cdot \bs p>0$. We are interested in the exact form and characteristics of the distribution $\bs p$ and of the mean fitness $\overline{w}=\bs w\cdot \bs p$ for given $\bs w$ and $q$. It turns out that the special form of the mutation matrix $\bs Q$ allows us to obtain a number of analytical results using the standard methods of linear algebra.

\section{Main results and discussion}\label{sec:3}In this section we present the main results of this manuscripts and illustrate them with examples. The proofs and various additional details are delegated to the rest of the text.

\subsection{Analytical results for a general fitness landscape $\bs w$}
We consider the eigenvalue problem
\begin{equation}\label{main:1}
    \bs{QWp}=\overline{w} \bs p,
\end{equation}
where $\bs p=(p_0,\ldots,p_{l-1})^\top$ is the positive eigenvector corresponding to the leading eigenvalue $\overline{w}$ with the normalization
\begin{equation}\label{main:2}
    \sum_{i=0}^{l-1}p_i=1,\quad \overline{w}=\bs{w}\cdot \bs p.
\end{equation}
Both the leading eigenvalue $\overline{w}$ and the corresponding eigenvector $\bs p$ are functions of the mutation parameter $q$ (the probability of the honest reproduction per site per replication event):
$$
\overline{w}=\overline{w}(q),\quad \bs p=\bs p(q),
$$
and our main goal is to study the behavior of $\overline{w}(q)$ depending on $q\in[0,1]$.

It turns out that it is possible to exactly calculate $\overline{w}(q)$ and its first derivative with respect to parameter $q$ for some specific values of $q$. In particular, if the length $N$ of sequences in the population satisfies  $N\geq 2$ and the eigenvalues $\overline{w}(0)$ and $\overline{w}(1)$ of matrix $\bs{QW}$ have multiplicity one then we have the following Table~\ref{tab:main:1}.
\begin{table}[!ht]
  \centering
  \begin{tabular}{ |c | c | c | c|}
  \hline&&&\\[-3mm]
  $q$&$0$&$0.5$&$1$\\[1mm]
  \hline&&&\\[-3mm]
  $\overline{w}(q)$&$\max \{\sqrt{w_iw_{i^\ast}}\}$&$\displaystyle\frac{1}{2^N}\sum_{i=0}^{2^N-1}w_i$&$\max\{w_i\}$\\[5mm]
  \hline&&&\\[-3mm]
  $\overline{w}'(q)$ & $-N\max\{\sqrt{w_i w_{i^\ast}}\}$ & $\displaystyle \frac{1}{\overline{w}(0.5)2^{2N-1}}\,\bs w\cdot \bs A\bs w$ & $N\max\{w_i\}$\\[3mm]
  \hline
\end{tabular}
  \caption{Expressions for $\overline{w}(q)$ for particular $q$. Here $i^\ast$ is the index conjugate to $i$, formally, $i^\ast=2^N-1-i$. Matrix $\bs A:=N\bs E-2\bs H$, where $\bs E$ is the matrix of all ones, and $\bs H$ is the matrix with elements $H_{ij}$, where $H_{ij}$ is the Hamming distance between sequences $i$ and $j$ }\label{tab:main:1}
\end{table}

Moreover, for $N\geq 3$ it is possible to calculate $\overline{w}''(q)$ for $q=0,0.5,1$, see Table \ref{tab:N:1}.

The methods we used to calculate the expressions in Table \ref{tab:main:1} together with the analysis of the second derivative of $\overline{w}(q)$ lead to the following theorem.
\begin{theorem}\label{theor:main:1}Consider eigenvalue problem \eqref{main:1} together with normalization \eqref{main:2}. If the fitness landscape is such that the leading eigenvalues $\overline{w}(0)$ and $\overline{w}(1)$ have multiplicity 1, then there exists an absolute minimum $\hat{w}$ of function $\overline{w}(q)$ for $0<q\leq 0.5$. The point of this minimum $\hat{w}=\overline{w}(\hat{q})$ is determined by the condition $\overline{w}'(\hat{q})=0$. For $q\geq 0.5$ function $\overline{w}(q)$ is non-decreasing and convex.
\end{theorem}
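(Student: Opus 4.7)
The plan rests on two ingredients: the explicit derivative values in Table~\ref{tab:main:1} at $q=0,0.5,1$, and a positive-semidefinite analysis of $\bs{Q}_N\bs{W}$ in the shifted variable $s:=2q-1$. Using the Kronecker factorization $\bs{Q}_N=\bs{Q}_1^{\otimes N}$ and the identity
$$\bs{Q}_1\bigl(\tfrac{1+s}{2}\bigr)=\tfrac12\begin{pmatrix}1&1\\1&1\end{pmatrix}+\tfrac{s}{2}\begin{pmatrix}1&-1\\-1&1\end{pmatrix},$$
which exhibits $\bs{Q}_1$ as a sum of two rank-one positive-semidefinite matrices, the tensor power expands as
$$\bs{Q}_N\bigl(\tfrac{1+s}{2}\bigr)=\sum_{k=0}^{N}s^{k}\bs{R}_k,$$
where $\bs{R}_k$ sums all $N$-fold tensor products placing the second factor in $k$ coordinates and the first in the remaining $N-k$, so each $\bs{R}_k\succeq 0$.

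I would first use this expansion to handle the $q\ge 0.5$ part of the theorem. Since $\bs{Q}_N\bs{W}$ is similar, via $\bs{W}^{\pm 1/2}$, to the symmetric matrix $\bs{M}(s):=\bs{W}^{1/2}\bs{Q}_N\bs{W}^{1/2}=\sum_{k=0}^{N}s^{k}\bs{N}_k$ with positive-semidefinite $\bs{N}_k:=\bs{W}^{1/2}\bs{R}_k\bs{W}^{1/2}$, the leading eigenvalue satisfies $\overline{w}(q)=\lambda_{\max}(\bs{M}(2q-1))$. For $s\in[0,1]$ the scalar inequalities $s_1^k\le s_2^k$ and $(\alpha s_1+(1-\alpha)s_2)^k\le\alpha s_1^k+(1-\alpha)s_2^k$ lift term-by-term to the L\"owner-order relations $\bs{M}(s_1)\preceq\bs{M}(s_2)$ and $\bs{M}(\alpha s_1+(1-\alpha)s_2)\preceq\alpha\bs{M}(s_1)+(1-\alpha)\bs{M}(s_2)$. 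Both transfer to $\lambda_{\max}$ by L\"owner monotonicity and by the convexity of $\lambda_{\max}$ as a supremum of the linear functionals $\bs{v}\mapsto\bs{v}^{\top}\bs{A}\bs{v}$ over unit $\bs{v}$, giving that $\overline{w}(q)$ is non-decreasing and convex on $[0.5,1]$.

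Next I would localize the absolute minimum. From Table~\ref{tab:main:1}, $\overline{w}'(0)=-N\overline{w}(0)<0$ rules out a minimum at $q=0$, and the non-decreasing behavior just established rules out any minimum in $(0.5,1]$. To make sure the minimum is characterized by $\overline{w}'(\hat q)=0$ (including the boundary case $\hat q=0.5$), I would verify that the tabulated $\overline{w}'(0.5)\propto\bs{w}\cdot\bs{A}\bs{w}$ is non-negative. The character identity $N-2H_{ij}=\sum_{k=0}^{N-1}(-1)^{a_k^{(i)}+a_k^{(j)}}$, where $a_k^{(i)}$ is the $k$-th bit of $i$, rewrites this as a sum of squares,
$$\bs{w}\cdot\bs{A}\bs{w}=\sum_{k=0}^{N-1}\Bigl(\sum_{i=0}^{l-1}w_i(-1)^{a_k^{(i)}}\Bigr)^{2}\ge 0.$$
Compactness of $[0,1]$ together with analyticity of $\overline{w}$ on $(0,1)$ (granted by Perron simplicity for the irreducible $\bs{Q}_N\bs{W}$) then force the minimum to lie in $(0,0.5]$, and Fermat's criterion delivers $\overline{w}'(\hat q)=0$.

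The main obstacle is the positive-semidefinite expansion step: one must verify that the single-site mutation operator decomposes as a sum of two PSD rank-ones, that the tensor-product terms $\bs{R}_k$ inherit positive semidefiniteness, and that the symmetrization by the positive diagonal $\bs{W}^{1/2}$ preserves it. Once this structural observation is in place, the monotonicity and convexity of $\overline{w}$ on $[0.5,1]$ reduce to textbook L\"owner-order and variational facts, and the remaining localization of the minimum reduces to the character-sum identity together with the endpoint derivatives from Table~\ref{tab:main:1}.
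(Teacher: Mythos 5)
Your proposal is correct, and for the crucial part of the theorem --- monotonicity and convexity on $[0.5,1]$ --- it takes a genuinely different route from the paper. The paper passes to the symmetrized matrix $\bs L=\sqrt{\bs F}\bs D\sqrt{\bs F}$ and differentiates the Rayleigh quotient twice, obtaining $\overline{w}''=2\bs z'\cdot(\overline{w}\bs I-\bs L)\bs z'+\bs z\cdot \bs L''\bs z$ and checking that both quadratic forms are nonnegative for $q>0.5$ (via $\bs D''=\tfrac{4}{(2q-1)^2}\bs D(\Delta^2-\Delta)$); monotonicity then follows by combining $\overline{w}''\geq 0$ with $\overline{w}'(0.5)\geq 0$, the latter proved through the claim, asserted by induction, that $N\bs E-2\bs H$ has $N$ eigenvalues equal to $2^N$ and the rest zero. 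You instead expand $\bs Q_N=(\bs P+s\,\bs P^{\perp})^{\otimes N}=\sum_{k}s^k\bs R_k$ with $s=2q-1$, which is exactly the spectral decomposition of $\bs Q_N$ into the orthogonal projections onto the Hadamard-basis eigenspaces (consistent with the paper's eigenvalue list $(2q-1)^{\Ham{i}}$ with multiplicities $\binom{N}{k}$), and you lift the scalar monotonicity and convexity of $s\mapsto s^k$ on $[0,1]$ through the L\"owner order and the variational characterization of $\lambda_{\max}$ of the congruent symmetric family $\bs M(s)=\sqrt{\bs W}\bs Q_N\sqrt{\bs W}$. This is derivative-free: it requires no eigenvector perturbation and no simplicity of the leading eigenvalue on $(0.5,1]$, and it is robust under eigenvalue crossings; the paper's calculus approach, in exchange, produces the explicit second-derivative formulas at $q=0,0.5,1$ (Table \ref{tab:N:1}) that your argument does not yield. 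Your character-sum identity $N-2H_{ij}=\sum_{k=0}^{N-1}(-1)^{a_k^{(i)}+a_k^{(j)}}$ is a genuine sharpening of the paper's inductive claim: it exhibits $N\bs E-2\bs H=\sum_{k=0}^{N-1}\bs v_k\bs v_k^{\top}$ with mutually orthogonal $\pm 1$ vectors $\bs v_k$ of squared norm $2^N$, which proves the spectral statement outright. (Note that in your setup this verification is actually redundant: monotonicity on $[0.5,1]$ together with differentiability at $q=0.5$ already forces $\overline{w}'(0.5)\geq 0$.)

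One small point to tighten. At $q=0$ the matrix $\bs Q_N\bs W$ is \emph{not} irreducible ($\bs Q_N(0)$ is the permutation $i\mapsto i^{\ast}$), so Perron simplicity on $(0,1)$ does not by itself justify the one-sided derivative $\overline{w}'(0)=-N\overline{w}(0)<0$ that you quote from Table \ref{tab:main:1} to exclude a minimum at $q=0$. This is precisely where the theorem's hypothesis that $\overline{w}(0)$ has multiplicity one enters: it guarantees, by analytic perturbation at the endpoint, that $\overline{w}$ is one-sidedly differentiable at $q=0$ with the stated value, matching the paper's computation in Section \ref{sec:4:6}. With that attribution made explicit, your localization $\hat q\in(0,0.5]$ and the Fermat characterization $\overline{w}'(\hat q)=0$ go through exactly as you describe.
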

A typical example of $\overline{w}(q)$ is given in Figure~\ref{fig:main:1}.
\begin{figure}[!t]
\centering
\includegraphics[width=0.5\textwidth]{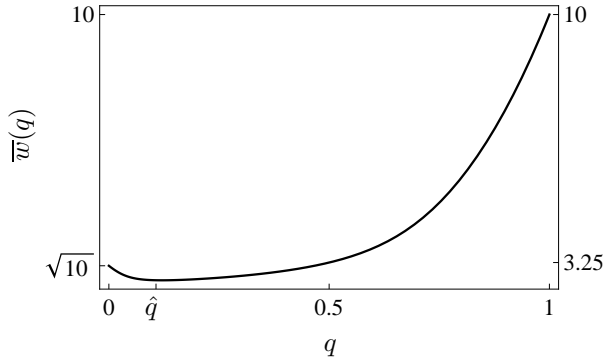}
\caption{A typical picture for a qualitative behavior of $\overline{w}(q)$ depending on $q$. In this particular example the sequence length is $N=3$, the fitness landscape if $\bs w=(10,3,3,2,3,2,2,1)$, therefore (see Table \ref{tab:main:1}), $\overline{w}(1)=10$, $\overline{w}(0)=\sqrt{10}$, $\overline{w}(0.5)=\frac{13}{4}=3.25$}\label{fig:main:1}
\end{figure}

From a biological standpoint we are mostly interested in the behavior of $\overline{w}$ for $q$ close to 1 (it is difficult to imagine a prospering population with mutation probabilities $\geq 0.5$), and for this range of parameter values the technical condition that the leading eigenvalue $\overline{w}(1)$ has multiplicity one simply means that there is a unique maximum in the fitness landscape $\bs w$.

\subsection{Results for the single peaked landscape}\label{sec:3:2}
The general results from the previous subsection can be further used to study particular fitness landscapes. One of the most famous (because of its simplicity and amenability to calculations, not because of its biological relevance) is the so called \textit{single} or \textit{sharply peaked landscape}, when there is a unique master sequence with the maximal fitness, and all other sequences have equal fitnesses lower than that of the master one. Single peaked landscape is the extreme form of \textit{positive} (or \textit{diminishing return}) \textit{epistasis}, when only the mutations in the master sequence cause (negative) changes in the fitness of the mutants. Subsequent mutations (if they do not imply the master sequence again) do not lead to further changes for the offspring fitness. If, however, the mutations have an aggravating effect, this is termed \textit{negative} (or \textit{synergetic}) \textit{epistasis} \cite{jainkrug2007}.

The single peaked landscape is defined as
$$
\bs w^\top=(w,\ldots,w,w+s,w,\ldots,w),
$$
where $w$ is a fixed positive constant and $s>0$ is the selective advantage of the master sequence, which has index $k$, $0\leq k\leq l-1$. In the literature the usual positioning of the master sequence is $k=0$, so that the fittest sequence has the composition of all zeros. Table~\ref{tab:main:1} for the single peaked landscape turns into Table \ref{tab:main:2}, where also the expressions for the second derivative are given.
\begin{table}[!hb]
  \centering
  \begin{tabular}{| c | c |c| c|}
  \hline&&&\\[-3mm]
  $q$&$0$&$0.5$&$1$\\[1mm]
  \hline&&&\\[-3mm]
  $\overline{w}(q)$&$\sqrt{w(w+s)}$&$\displaystyle w+\frac{s}{2^N}$& $w+s$\\[3mm]
  \hline&&&\\[-3mm]
  $\overline{w}'(q)$&$-N\sqrt{w(w+s)}$&$\displaystyle \frac{Ns^2}{(2^Nw+s)2^{N-1}}$&$N(w+s)$\\[4mm]
  \hline&&&\\[-3mm]
  $\overline{w}''(q)$&{\footnotesize$\displaystyle N\sqrt{w(w+s)}\left(N+\frac{4w}{s}\right)$} &{\footnotesize$\displaystyle \frac{Ns^2\bigl((N-1)s^2+2^{N+2}Nws+2^{2N}(N+1)w^2\bigr)}{2^{N-2}(s+w2^{N})^3}$}&{\footnotesize $\displaystyle N(w+s)\left(N-1+\frac{2w}{s}\right)$}\\[3mm]
  \hline
\end{tabular}
  \caption{Expressions for $\overline{w}(q)$, $\overline{w}'(q)$, and $\overline{w}''(q)$ for the single peaked landscape}\label{tab:main:2}
\end{table}

As a simple corollary of Theorem \ref{theor:main:1} we find that the minimum value of $\overline{w}(q)$ is found for $0<q\leq 0.5$, and for $0.5\leq q\leq 1$ function $\overline{w}(q)$ is nondecreasing and convex. This is an exact theoretical result valid for any finite sequence length $N$. However, as everyone dealing with Eigen's quasispecies model knows, this particular fitness landscape is a source of very impressive figures that illustrate the phenomenon dubbed by Eigen as the \textit{error threshold} \cite{eigen1988mqs}.

The error threshold is difficult to rigourously define (see \cite{Hermisson2002} for a detailed discussion of various threshold-like behaviors in the quasispecies model), but for the sake of the present discussion it suffices to approximate it in the following way. Assume that the fittest type has the index zero, and also assume that there are no mutations from any sequences to sequence 0. Then the first equation in \eqref{eq0:1} would read
$$
\dot p_0=q^Nw_0p_0-\overline{w}p_0.
$$
Assuming that $q^Nw_0=\overline{w}$ and $p_0\to 0$, we find that
$$
q^N=\frac{\sigma}{w_0}\,,
$$
where
$$
\sigma=\sum_{i=1}^{l-1}w_ip_i.
$$
Assuming the single peaked fitness landscape, i.e., $w_0=w+s$, $w_i=w$, we find
\begin{equation}\label{main:3a}
q^\ast_{\text{Eigen}}=\sqrt[N]{\frac{w}{w+s}}\,.
\end{equation}
or using logs and approximations,
\begin{equation}\label{main:3}
q_{\text{Eigen}}=1-\frac{\log \frac{w+s}{w}}{N}\,,
\end{equation}
    This expression becomes exact when $N\to \infty$.

Of course this expression for any meaningful values of $w,s,N$ is quite close to 1. The error threshold signifies the transition from a localizes quasispecies, when most of the sequences in the population are the master sequence and its close mutants, to the case when the distribution of the sequences is close to uniform. To unite these observations with the exact statement of Theorem \ref{theor:main:1}, we have to conjecture that the graph of $\overline{w}(q)$ should possess a very flat plateau, which starts approximately at the value $q_{\text{Eigen}}$ on the right to some value on the left, and the exact minimum $\hat{w}$ should be quite close to $\overline{w}(q_{\text{Eigen}})$. In general, the critical mutation rate the closer to 1 the longer the sequence. This conjecture can be supported by numerical computations (see figures below). Therefore, to present mathematically rigorous results that would be of some value for the biological interpretation of the model, we need to estimate the point at which the graph of function $\overline{w}(q)$ becomes \textit{almost} flat for $q$ close to 1. Our estimate is based on the following proposition:
\begin{proposition}\label{prop:main:1}Consider the Eigen quasispecies model with the single peaked landscape $\bs w^\top=(w+s,w,\ldots,w)$ for $w,s>0$. If $0\leq q\leq 0.5$ then the mean fitness $\overline{w}(q)$ satisfies
\begin{equation}\label{main:4}
    w+sq^N\leq \overline{w}(q)\leq \sqrt{w^2+sw(1-q)^N+s^22^{-2(N+1)}}+s2^{-N-1}.
\end{equation}
If $0.5\leq q\leq 1$ then the mean fitness $\overline{w}(q)$ has a lower bound
\begin{equation}\label{main:5}
\max\left\{w+s2^{-N},(w+s)q^N\right\} \leq   \overline{w}(q)
\end{equation}
and an upper bound
\begin{equation}\label{main:6}
\overline{w}(q)\leq \frac{w+sq^N}{2}+\sqrt{\left(\frac{w+sq^N}{2}\right)^2-ws\bigl(q^N-(2q^2-2q+1)^N\bigr)}\,.
\end{equation}
\end{proposition}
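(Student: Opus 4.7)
The plan rests on two reductions established at the start. Summing all $2^N$ components of $\bs{QW}\bs p=\overline{w}\bs p$ and using that $\bs Q$ is doubly stochastic gives $\overline{w}=w+sp_0$, so every bound on $\overline{w}(q)$ is equivalent to a bound on $p_0(q)$. The $0$-th row of the same eigenvalue equation then yields the master identity
\[
(\overline{w}-sq^N)\,p_0 \;=\; w(\bs Q\bs p)_0 \;=\; w\sum_{j=0}^{2^N-1}q_{0j}p_j.
\]
A companion fact I will use is $\sum_j q_{0j}^2=(q^2+(1-q)^2)^N=q'^N$ by the binomial theorem, and more generally $\bs Q^2$ is itself a mutation matrix with fidelity $q'=2q^2-2q+1\in[\tfrac12,1]$ (each site undergoes two independent mutations).

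The two lower bounds are short. On $[0,\tfrac12]$, the Perron--Frobenius row-sum inequality $\overline{w}\geq\min_i(w+sq_{i0})$ combined with the fact that $q_{i0}=q^{N-H_{i0}}(1-q)^{H_{i0}}$ is minimized at $H_{i0}=0$ for $q\leq\tfrac12$ gives $\overline{w}\geq w+sq^N$, the left inequality in \eqref{main:4}. On $[\tfrac12,1]$, dropping the off-diagonal terms in the master identity (so $(\bs Q\bs p)_0\geq q^Np_0$) yields $\overline{w}\geq(w+s)q^N$; separately, Theorem~\ref{theor:main:1} gives monotonicity on $[\tfrac12,1]$, so $\overline{w}(q)\geq\overline{w}(0.5)=w+s/2^N$ by Table~\ref{tab:main:2}. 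Taking the maximum proves \eqref{main:5}.

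The upper bound \eqref{main:6} on $[\tfrac12,1]$ is the cleanest step. Multiplying the $i$-th eigenvalue equation by $q_{0i}$, summing over $i$, and using $\sum_i q_{0i}^2=q'^N$ produces the iterated identity
\[
(\overline{w}^2-sq^N\overline{w}-wsq'^N)\,p_0 \;=\; w^2(\bs Q^2\bs p)_0.
\]
Since $q'\geq\tfrac12$, the largest entry in row $0$ of $\bs Q^2$ is the diagonal $q'^N$, so $(\bs Q^2\bs p)_0\leq q'^N$. Substituting this and $p_0=(\overline{w}-w)/s$ and simplifying reduces the inequality to the quadratic $\overline{w}^2-(w+sq^N)\overline{w}+ws(q^N-q'^N)\leq0$, whose larger root is exactly the right-hand side of \eqref{main:6}. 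The discriminant is non-negative because $q\geq q'$ on $[\tfrac12,1]$.

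The main obstacle is the upper bound \eqref{main:4} on $[0,\tfrac12]$: the same $\bs Q^2$-iteration, combined with $(\bs Q^2\bs p)_0\leq q'^N$, gives a quadratic whose shape does not match the one stated. Instead I would stay at first order and estimate $(\bs Q\bs p)_0$ directly by splitting off the diagonal: $(\bs Q\bs p)_0=q^Np_0+\sum_{j\neq0}q_{0j}p_j$; for $q\leq\tfrac12$ every $q_{0j}$ with $j\neq0$ is dominated by $q_{0,i^\ast}=(1-q)^N$, so $(\bs Q\bs p)_0\leq q^Np_0+(1-q)^N(1-p_0)$. Plugging this into the master identity together with $p_0=(\overline{w}-w)/s$ produces a quadratic inequality in $\overline{w}$ which, after completing the square in $\overline{w}-s/2^{N+1}$, rearranges to $(\overline{w}-s/2^{N+1})^2\leq w^2+sw(1-q)^N+s^2\,2^{-2(N+1)}$ --- which is \eqref{main:4}. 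The delicate point is the algebraic reconciliation: the coefficients coming out of the elementary bound must be matched to the specific combination $s/2^N$ and $w^2+sw(1-q)^N$ appearing in \eqref{main:4}, a step that uses the a priori inequalities $q^N\leq 2^{-N}\leq(1-q)^N$ valid on $[0,\tfrac12]$ together with the row-sum upper bound $p_0\leq(1-q)^N$ that follows from $\overline{w}\leq w+s(1-q)^N$. Tightness at $q=\tfrac12$ (recovering $w+s/2^N=\overline{w}(0.5)$) serves as the natural cross-check.
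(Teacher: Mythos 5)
Three of your four bounds are correct, and your routes are legitimately different from (and in places cleaner than) the paper's. The master identity $(\overline{w}-sq^N)p_0=w(\bs Q\bs p)_0$ and the relation $\overline{w}=w+sp_0$ are right; the min-row-sum argument for $\overline{w}\geq w+sq^N$, the diagonal-dropping argument for $\overline{w}\geq (w+s)q^N$, and the appeal to Theorem~\ref{theor:main:1} plus $\overline{w}(0.5)=w+s2^{-N}$ are all valid (and non-circular: Theorem~\ref{theor:main:1} is proved independently in Section~\ref{sec:4}, and its multiplicity-one hypotheses hold for the single peaked landscape). Your $\bs Q^2$-iteration for \eqref{main:6} is also correct and in fact lands on exactly the paper's quadratic: the paper instead expands the rank-one reduction into the series $\sum_{m\geq 1}(w/\overline{w})^m\bigl(\tfrac{1+t^m}{2}\bigr)^N=w/s$ with $t=2q-1$, where $\bigl(\tfrac{1+t^m}{2}\bigr)^N=(\bs Q^m)_{00}$, and bounds all terms $m\geq 2$ by $q'^N=(2q^2-2q+1)^N$; your argument is the $m=2$ truncation of that same computation with a geometric tail bound. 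One slip there: $q\geq q'$ makes the constant term $ws(q^N-q'^N)$ \emph{nonnegative}, which works against the discriminant rather than for it; the correct one-liner is $\bigl(\tfrac{w+sq^N}{2}\bigr)^2-ws(q^N-q'^N)\geq \bigl(\tfrac{w-sq^N}{2}\bigr)^2\geq 0$, or simply note that the real number $\overline{w}$ satisfies the quadratic inequality, which forces real roots.

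The genuine gap is the upper bound in \eqref{main:4}, precisely at the step you flagged as ``algebraic reconciliation'': it cannot be carried out, because the implication runs the wrong way. Your estimate $(\bs Q\bs p)_0\leq q^Np_0+(1-q)^N(1-p_0)$ together with $p_0=(\overline{w}-w)/s$ yields
\begin{equation*}
\overline{w}^2-\bigl(w+(w+s)q^N-w(1-q)^N\bigr)\overline{w}+w(w+s)\bigl(q^N-(1-q)^N\bigr)\leq 0,
\end{equation*}
whose larger root is \emph{not} dominated by the right-hand side of \eqref{main:4}: for $N=10$, $w=1$, $s=9$, $q=0.3$, your root is $\approx 1.206$ while the bound in \eqref{main:4} evaluates to $\approx 1.124$. (Your quadratic is actually tighter near the endpoints --- it is exact at $q=0$, giving $\sqrt{w(w+s)}$, and at $q=0.5$ --- but weaker in the interior, so neither bound implies the other, and no combination of $q^N\leq 2^{-N}\leq (1-q)^N$ with $p_0\leq(1-q)^N$ can close this.) The missing idea is the even/odd alternation in the powers of $\bs Q$: for $t=2q-1\leq 0$ one has $(\bs Q^m)_{00}=\bigl(\tfrac{1+t^m}{2}\bigr)^N\leq 2^{-N}$ for odd $m$ but only $\leq(1-q)^N$ for even $m$. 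The paper splits the full series over odd and even $m$, sums the two geometric subseries, and obtains $\overline{w}^2-s2^{-N}\overline{w}\leq w^2+sw(1-q)^N$, whose larger root is exactly the right side of \eqref{main:4}. A single application of $\bs Q$ lumps the entire tail under the even-$m$ weight $(1-q)^N$ and is demonstrably too lossy; to repair your proof, replace the one-step estimate by the full expansion in powers of $\bs Q$ (all $m$) with the odd/even splitting, which is what the paper's derivation via \eqref{eq6:3} does.
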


An illustration of Proposition~\ref{prop:main:1} is given in Figure \ref{fig:main:2}, where it can be seen that the estimates \eqref{main:5} and \eqref{main:6} are quite tight.
\begin{figure}[!b]
\centering
\includegraphics[width=0.5\textwidth]{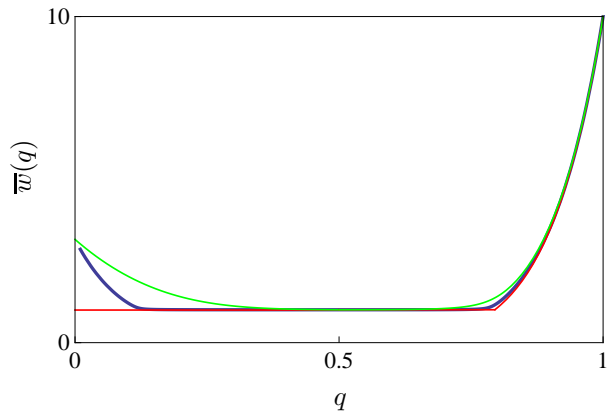}
\caption{Illustration of Proposition~\ref{prop:main:1}. The green curve shows the upper bounds in \eqref{main:4} and \eqref{main:6}, the red curve shows the lower bounds in \eqref{main:4} and \eqref{main:5}, and the blue curve is the numerically computed $\overline{w}(q)$ for the single peaked fitness landscape $\bs w=(10,1,\ldots,1)$ and sequence length $N=10$}\label{fig:main:2}
\end{figure}

Moreover, we note that in the lower estimate \eqref{main:5}, the second of the two functions is used first, and after some $q^\ast$ another one is used. The point of intersection $q^\ast$ can be found as
\begin{equation}\label{main:7}
    q^\ast=\sqrt[N]{1-\frac{s(1-2^{-N})}{w+s}}=\sqrt[N]{\frac{w+s2^{-N}}{w+s}}=\sqrt[N]{\frac{\overline{w}(0.5)}{\overline{w}(1)}}\,.
\end{equation}
This is almost exactly expression \eqref{main:3a} with a very small difference in $s2^{-N}$ term. In Figure \ref{fig:main:3} four examples are shown how the estimate $q^\ast$ works for different sequence lengths.
\begin{figure}[!b]
\centering
\includegraphics[width=0.9\textwidth]{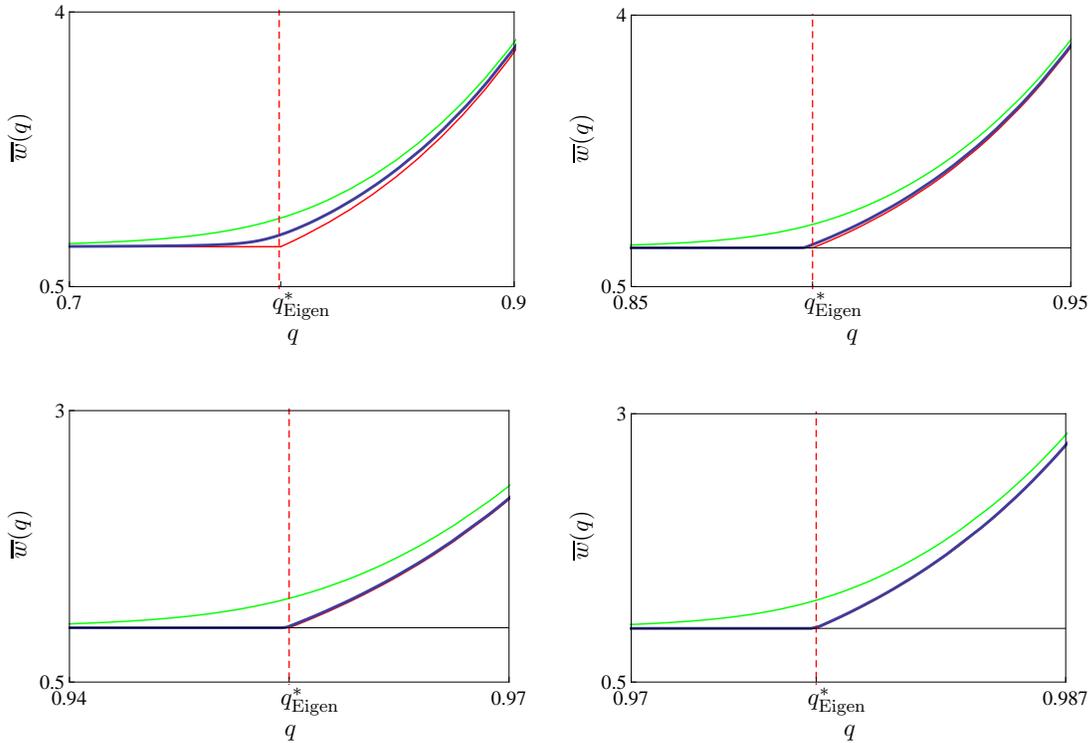}
\caption{Estimate \eqref{main:7} of the error threshold (dotted red line) for the single peaked fitness landscape $\bs w=(10,1,\ldots,1)$ for sequence lengths $N=10,20,50,100$ (upper left, upper right, lower left, lower right respectively). The green curves are the upper bound \eqref{main:6} and the red curves are the lower bound \eqref{main:5}. The blue curves are the graphs of $\overline{w}(q)$ calculated numerically. Note a sudden change in the slope of $\overline{w}(q)$, especially for larger $N$}\label{fig:main:3}
\end{figure}

\subsection{Relationship of the single peaked landscape and an arbitrary fitness landscape}\label{sec:3:3}The reason that we devoted so much space to the single peaked landscape is not only its analytical tractability, but also the following almost obvious fact.
\begin{proposition}\label{prop:3:3}Consider two Eigen's models with fitness landscapes $\bs w^{(1)}=(w_0^{(1)},\ldots,w_{l-1}^{(1)})$ and $\bs w^{(2)}=(w_0^{(2)},\ldots,w_{l-1}^{(2)})$. If $\bs w^{(1)}\geq \bs w^{(2)}$ then
$$
\overline{w}^{(1)}(q)\geq \overline{w}^{(2)}(q),\quad 0\leq q\leq 1.
$$
\end{proposition}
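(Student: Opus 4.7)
The plan is to reduce the claim to the classical monotonicity of the Perron--Frobenius eigenvalue under entrywise ordering of nonnegative matrices. Since every entry of $\bs Q$ is nonnegative and the diagonal matrix $\bs W^{(1)}-\bs W^{(2)}=\diag(w_0^{(1)}-w_0^{(2)},\ldots,w_{l-1}^{(1)}-w_{l-1}^{(2)})$ has nonnegative diagonal by hypothesis, we obtain
$$\bs Q\bs W^{(1)}\;=\;\bs Q\bs W^{(2)}+\bs Q(\bs W^{(1)}-\bs W^{(2)})\;\geq\;\bs Q\bs W^{(2)}\;\geq\;\bs 0$$
entrywise. Since $\overline{w}^{(i)}(q)$ is by definition the Perron root of $\bs Q\bs W^{(i)}$, the proposition reduces to the statement that $\rho(\bs A)\geq\rho(\bs B)$ whenever $\bs A\geq\bs B\geq\bs 0$ entrywise.

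For the core step I would use the quasispecies eigenvector of the smaller landscape as a test vector. Assume first $q\in(0,1)$; then every entry of $\bs Q$ is strictly positive, $\bs Q\bs W^{(2)}$ is a positive matrix, and its leading eigenvector $\bs p^{(2)}>0$ is unique up to scaling and satisfies $\bs Q\bs W^{(2)}\bs p^{(2)}=\overline{w}^{(2)}(q)\bs p^{(2)}$. The entrywise inequality above gives
$$\bs Q\bs W^{(1)}\bs p^{(2)}\;\geq\;\bs Q\bs W^{(2)}\bs p^{(2)}\;=\;\overline{w}^{(2)}(q)\,\bs p^{(2)},$$
so $(\bs Q\bs W^{(1)}\bs p^{(2)})_i/p_i^{(2)}\geq\overline{w}^{(2)}(q)$ for every $i$. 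The Collatz--Wielandt characterization $\rho(\bs A)\geq\min_i (\bs A\bs x)_i/x_i$, valid for any nonnegative $\bs A$ and any strictly positive $\bs x$, then yields $\overline{w}^{(1)}(q)\geq\overline{w}^{(2)}(q)$.

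The endpoint cases $q=0$ and $q=1$ can be dispatched in one line each by direct appeal to Table~\ref{tab:main:1}: $\overline{w}^{(i)}(1)=\max_j w_j^{(i)}$ and $\overline{w}^{(i)}(0)=\max_j\sqrt{w_j^{(i)}w_{j^*}^{(i)}}$ are both manifestly monotone in the landscape, and alternatively they follow by continuity of the Perron root in the matrix entries as $q\downarrow 0$ or $q\uparrow 1$. There is no genuine obstacle here; the only mild point of care is ensuring that the Collatz--Wielandt bound is applied to a strictly positive eigenvector, which is automatic on $(0,1)$ because $\bs Q$ is positive there, and automatic at the endpoints because the Perron root is known explicitly.
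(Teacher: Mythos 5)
Your proof is correct, but it takes a genuinely different route from the paper's. The paper exploits the symmetry of $\bs Q$: since $\bs{QW}$ is similar to the symmetric nonnegative matrix $\bs C=\sqrt{\bs W}\bs Q\sqrt{\bs W}$, the leading eigenvalue admits the variational characterization $\overline{w}=\max\{\bs z\cdot \bs{Cz}\mid \bs z\geq 0,\ \bs z\cdot\bs z=1\}$ (restricting to the nonnegative orthant loses nothing because the Perron vector $\bs s=\sqrt{\bs W}\bs p$ is nonnegative), and then $\bs W^{(1)}\geq\bs W^{(2)}$ gives $\bs C^{(1)}\geq\bs C^{(2)}\geq 0$ entrywise, so the quadratic form --- and hence the constrained maximum --- can only increase. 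You instead reduce to the classical Perron--Frobenius monotonicity $\bs A\geq\bs B\geq\bs 0\Rightarrow\rho(\bs A)\geq\rho(\bs B)$, proved via the Collatz--Wielandt lower bound applied to the Perron vector of the smaller matrix. Both arguments are sound, and each buys something: yours never uses the symmetry of $\bs Q$, so it would survive asymmetric mutation schemes where the paper's $\sqrt{\bs W}$-symmetrization breaks down; on the other hand, it requires strict positivity of the test eigenvector $\bs p^{(2)}$, which fails at $q=0$ and $q=1$ where $\bs Q$ degenerates to a permutation matrix, forcing your (correctly handled) endpoint cases via the explicit formulas of Table~\ref{tab:main:1} or continuity of the spectral radius. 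The paper's Rayleigh-quotient argument covers all $q\in[0,1]$ in one stroke, with no irreducibility, positivity, or multiplicity caveats, and reuses the symmetrization machinery already set up in Section~\ref{sec:4:5}.
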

In particular, if the fitness landscape is $\bs w=(w_0,\ldots,w_{l-1})$, and $w+s=\max_ i\{w_i\}, w=\max_{i}\{w_i\colon w_i\neq w+s\}$, then the single peaked landscape $\bs w_{\text{SPL}}=(w,\ldots,w,w+s,w,\ldots,w)$ dominates $\bs w$. In a similar vein a lower boundary by a corresponding single peaked landscape can be constructed.

Therefore, we obtain an important corollary.
\begin{corollary}\label{corr:3:4}Consider a fitness landscape $\bs w=(w_0,\ldots,w_{l-1})$ for the Eigen model, such that the maximal fitness is $w_0$. Denote $w+s=w_0$ and $w=\max_{i\geq 1}\{w_i\}$. Then, if there exists the error threshold for the model with $\bs w$, then its critical mutation rate $q^\ast$ if bounded above by $q^\ast_{\text{\emph{SFL}}}$, calculated by \eqref{main:7} with $w+s$ and $w$ as defined above.
\end{corollary}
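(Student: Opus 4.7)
The plan is to deduce the corollary essentially immediately from the monotonicity statement of Proposition~\ref{prop:3:3} together with the values at $q=0.5$ and $q=1$ recorded in Tables~\ref{tab:main:1} and \ref{tab:main:2}. First I would associate with $\bs w$ the dominating single peaked landscape $\bs w_{\text{SPL}}=(w,\ldots,w,w+s,w,\ldots,w)$, with the master index placed wherever $w_0$ sits in $\bs w$. By the hypothesis $w+s=\max_i w_i$ and $w=\max_{i\geq 1} w_i$, every non-master entry of $\bs w_{\text{SPL}}$ dominates the corresponding entry of $\bs w$, while the master entries agree; hence $\bs w_{\text{SPL}}\geq \bs w$ componentwise. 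Proposition~\ref{prop:3:3} then yields $\overline{w}(q)\leq \overline{w}_{\text{SPL}}(q)$ for every $q\in[0,1]$.

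Next I would evaluate at the two reference points. From Table~\ref{tab:main:1}, $\overline{w}(1)=\max_i w_i=w+s=\overline{w}_{\text{SPL}}(1)$, so the two mean fitnesses coincide at $q=1$. From Table~\ref{tab:main:2}, $\overline{w}_{\text{SPL}}(0.5)=w+s\cdot 2^{-N}$, and combining this with the inequality of the previous step gives $\overline{w}(0.5)\leq w+s\cdot 2^{-N}$. (Alternatively this can be read off directly from $\overline{w}(0.5)=2^{-N}\sum_i w_i$ in Table~\ref{tab:main:1} by bounding every $w_i$, $i\geq 1$, from above by $w$.)

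Finally I would adopt the right-hand side of \eqref{main:7} as the operational definition of the critical mutation probability for a general landscape, namely
\[
q^\ast:=\sqrt[N]{\overline{w}(0.5)/\overline{w}(1)},
\]
which specializes to \eqref{main:7} in the SPL case and is the natural analogue of the Eigen heuristic \eqref{main:3a}. Monotonicity of the $N$-th root combined with the two inequalities above then yields
\[
q^\ast=\sqrt[N]{\frac{\overline{w}(0.5)}{\overline{w}(1)}}\leq \sqrt[N]{\frac{\overline{w}_{\text{SPL}}(0.5)}{\overline{w}_{\text{SPL}}(1)}}=q^\ast_{\text{SFL}},
\]
which is the stated bound.

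The main obstacle is conceptual rather than computational: the paper never commits to a single rigorous definition of the error threshold for an arbitrary landscape, so the corollary implicitly requires us to fix one. I would settle this by taking the geometric-mean formula above, since it is the only such quantity appearing earlier in the discussion and it reduces exactly to \eqref{main:7} when $\bs w=\bs w_{\text{SPL}}$. With this choice in place the corollary is truly a one-line monotonicity consequence of Proposition~\ref{prop:3:3} and two table look-ups, with no further spectral estimates needed.
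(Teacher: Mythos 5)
Your proposal is correct, and its load-bearing step is exactly the paper's: the paper ``proves'' Corollary~\ref{corr:3:4} in one line, by observing (just before the corollary, in Section~\ref{sec:3:3}) that the single peaked landscape $\bs w_{\text{SPL}}=(w+s,w,\ldots,w)$ dominates $\bs w$ componentwise, invoking Proposition~\ref{prop:3:3}, and reading the threshold estimate off \eqref{main:7}; no further argument is supplied. Your two table lookups are both right: $\overline{w}(1)=\max_i w_i=w+s=\overline{w}_{\text{SPL}}(1)$, and $\overline{w}(0.5)=2^{-N}\sum_i w_i\leq 2^{-N}\bigl((w+s)+(2^N-1)w\bigr)=w+s2^{-N}=\overline{w}_{\text{SPL}}(0.5)$ --- and, as you note yourself, the second inequality is immediate without Proposition~\ref{prop:3:3} at all, so your proof actually needs less machinery than the paper's remark suggests.

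The one place where you genuinely diverge is the definitional move, and it deserves a caveat. The formula $q^\ast=\sqrt[N]{\overline{w}(0.5)/\overline{w}(1)}$ is \eqref{main:8a}, which the paper introduces only \emph{after} the corollary, and explicitly as a \emph{conjecture} for predicting the critical mutation rate of a general landscape; the corollary itself is meant as a (necessarily informal) statement about the actual transition point, which is why it carries the hypothesis ``if there exists the error threshold.'' Under your definition that hypothesis becomes idle and the corollary degenerates into a tautological comparison of two closed-form expressions. You can keep the hypothesis meaningful at essentially no extra cost. By the variational characterization $\overline{w}=\max\{\bs z\cdot\bs C\bs z\mid \bs z\geq 0,\ \bs z\cdot\bs z=1\}$ with $\bs C=\sqrt{\bs W}\bs Q\sqrt{\bs W}$ from Section 5.3 (take $\bs z=\bs e_0$, the coordinate vector at the fittest type), one gets $\overline{w}(q)\geq w_0q_{00}=(w+s)q^N$ for \emph{any} landscape with maximal fitness $w+s$; and by Theorem~\ref{theor:main:1}, $\overline{w}(q)\geq\overline{w}(0.5)$ on $[0.5,1]$. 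Hence for every $q>q^\ast_{\text{SFL}}$ one has $\overline{w}(q)\geq(w+s)q^N>w+s2^{-N}\geq\overline{w}(0.5)$: past $q^\ast_{\text{SFL}}$ the mean fitness has certifiably left the uniform-phase plateau and is rising at least like $(w+s)q^N$, so any transition of $\overline{w}(q)$ away from the plateau --- however one formalizes it --- must occur at some $q\leq q^\ast_{\text{SFL}}$. This sandwich argument is what the paper's domination remark implicitly gestures at, and it subsumes your computation, since the intersection of the two lower-bound branches $\max\{\overline{w}(0.5),\,\overline{w}(1)q^N\}$ occurs exactly at \eqref{main:8a}, mirroring how \eqref{main:7} was derived from \eqref{eq6:9} in the SPL case.
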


To illustrate the last corollary, consider the fitness landscape defines as
\begin{equation}\label{main:8}
    w_i=s^{(H_i)^\alpha},\quad i=0,\ldots,l-1,
\end{equation}
where $0<s<1$, $H_i$ as before the Hamming norm of the index $i$, and $\alpha>0$ is the parameter that defines the epistasis in the system. If $\alpha<1$ then we have positive epistasis, and for $\alpha>1$ we have negative epistasis. If $\alpha=1$ then the fitness landscape is multiplicative, and it is possible to write down the explicit solution for $\bs p(q)$ and, consequently, for $\overline{w}(q)$ (see Appendix \ref{append:A}). We are especially interested in values $\alpha<1$.

The results of the computations are given in Figure \ref{fig:main:4}.
\begin{figure}[!t]
\centering
\includegraphics[width=0.9\textwidth]{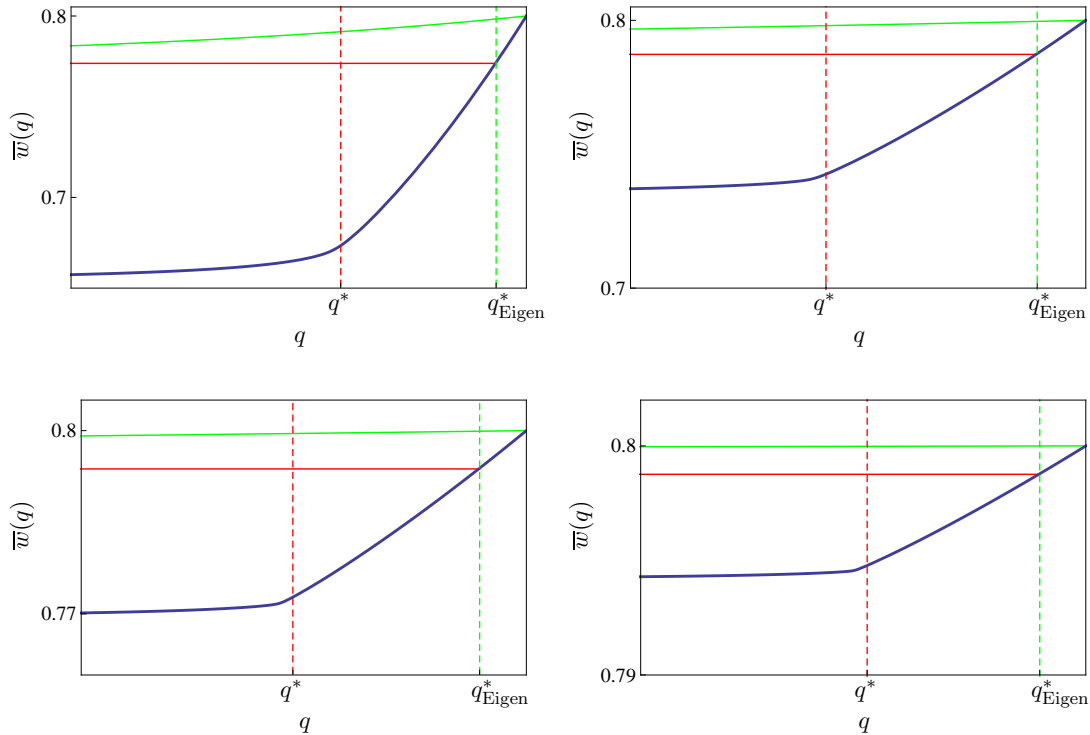}
\caption{Illustration of Corollary \ref{corr:3:4}. The Eigen model with fitness landscape \eqref{main:8} is considered. The cases of $s=0.8$ and $\alpha=0.2,0.1,0.05,0.01$ are shown. The blue curve is the leading eigenvalue $\overline{w}(q)$. The red dotted line shows the estimate $q^\ast$ by \eqref{main:8a}, and the green dotted line is $q^\ast_\text{Eigen}$, calculated by \eqref{main:3a}. The green and red lines show upper and lower boundaries \eqref{main:5} and \eqref{main:6} for the single peaked landscape $\bs w=(w+s,w,\ldots,w)$, where $w+s=\max_i\{w_i\}$ and $w=\max_{i}\{w_i\colon w_i\neq w+s\}$}\label{fig:main:4}
\end{figure}
Together with $\overline{w}(q)$ (blue), the upper (green) and lower (red) bounds of the corresponding single peaked landscape, which give upper estimate of the critical mutation rate $q^\ast_{\text{Eigen}}$, the expression
 \begin{equation}\label{main:8a}
    q^{\ast}=\sqrt[N]{\frac{\overline{w}(0.5)}{\overline{w}(1)}}=\frac{1}{2}\sqrt[N]{\frac{\sum w_i}{\max\{w_i\}}}\,.
\end{equation}
is calculated. An obvious inspiration for \eqref{main:8a} is the exact result for the single peaked landscape \eqref{main:7}. However, a very similar formula can be obtained if we start again with the equation
$$
q^N=\frac{\sigma}{w_0}\,,\quad \sigma=\sum_{i=1}^{l-1}w_ip_i,
$$
and assume that $w_0=\max\{w_i\}$ and the distribution of $\bs p$ is almost uniform,
$$
p_i\approx \frac{1}{2^N}\,,\quad i=1,\ldots l-1.
$$

A perfunctory inspection shows that the critical mutation rate calculated with the help of \eqref{main:8a} gives a remarkable agreement with observable sudden change of the slope of function $\overline{w}(q)$. Moreover, the estimate \eqref{main:3a} ($q^\ast_\text{Eigen}$) gives a very bad indication of the actual point of the error threshold. We therefore conjecture that formula \eqref{main:8a} can be used for \textit{any} fitness landscape to predict the critical mutation rate, and this estimate will give much better result than the usually employed estimate \eqref{main:3} or \eqref{main:3a}, which significantly overestimates the critical mutation rate.

This estimate \eqref{main:8a} for the critical mutation rate emphasizes the following point: the inverse relationship with the sequence length as in \eqref{main:3} does not pertain to \textit{any} possible fitness landscape, as shown in Figure \ref{fig:main:30}.

It is possible to have fitness landscapes such that the critical mutation rate, according to the formula \eqref{main:8a}, approaches 0.5, which simply means that there are no pronounced sharp transitions in this particular model (recall from Table \ref{tab:main:1} that for $q=1/2$ independently of the fitness landscape we have uniform quasispecies distribution).

\begin{figure}[!t]
\centering
\includegraphics[width=0.49\textwidth]{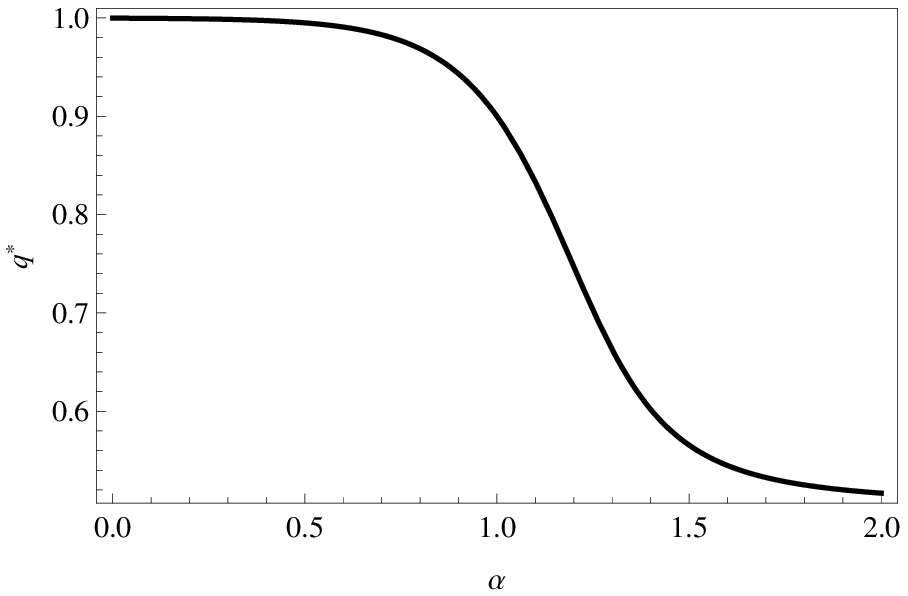}\hfill
\includegraphics[width=0.49\textwidth]{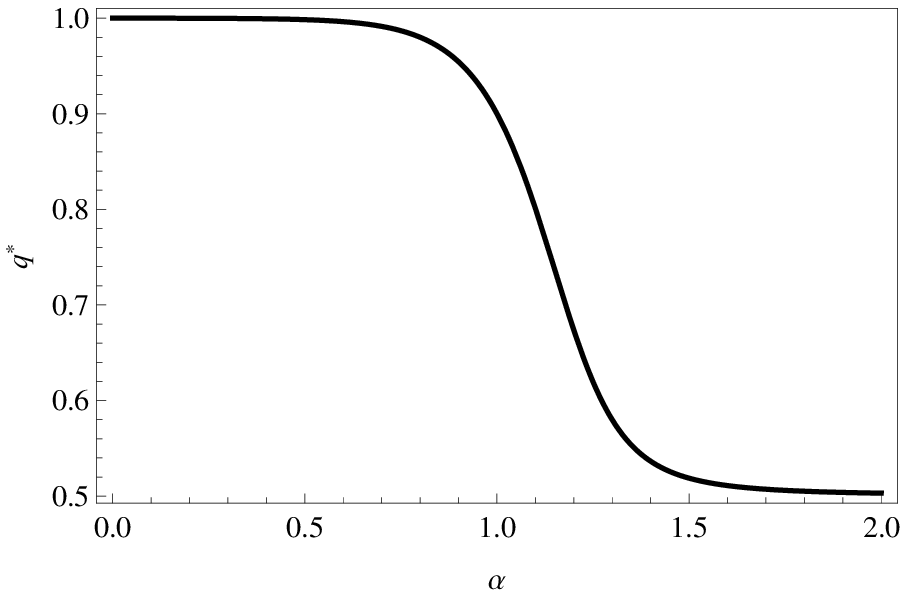}
\caption{Estimate \eqref{main:8a} of the critical mutation rate versus $\alpha$ for the fitness landscape \eqref{main:8}. Left: $N=1000$, Right: $N=10000$}\label{fig:main:30}
\end{figure}

\section{Analytical investigation in the general case}\label{sec:4}
In this section we present proofs of the general and specific results discussed above. In particular, our goal is to give our proof of Theorem \ref{theor:main:1} and the calculations leading to Table \ref{tab:main:1} and Table \ref{tab:main:2}. Some additional details are delegated to Appendix. We start with the properties of the mutation matrix $\bs Q=\bs Q_N$, which will be of paramount importance for the rest of the exposition. Most of the results in the first subsection are well known (see \cite{rumschitzki1987spectral} for the first paper that treats $\bs Q_N$ in a similar way), and we present them to make the manuscript self-contained.
\subsection{Properties of mutation matrix $\bs Q_N$}\label{propQ}
Recall that $\bs Q_N=(q_{ij})_{l\times l},\,l=2^N$, where
\begin{equation}\label{eq1:1}
    q_{ij}=q^{N-H_{ij}}(1-q)^{H_{ij}},
\end{equation}
$q$ is the probability of an error-free copying per site per replication event, and $H_{ij}$ is the Hamming distance between sequences $i$ and $j$.

Below we list the properties of $\bs Q_N$, which are important for the subsequent proofs.
\begin{enumerate}
\item Matrix $\bs Q_N$ is symmetric, which follows from the fact that the Hamming distance is symmetric: $H_{ij}=H_{ji}$. Therefore, there are always $l=2^N$ linearly independent real eigenvectors of $\bs Q_N$, which, consequently, form a basis of $\R^{l}$.
\item The sum of elements of any row or any column of $\bs Q_N$ is equal to one:
$$
\sum_{k=0}^{l-1}q_{kj}=\sum_{k=0}^{l-1}q_{ik}=1.
$$
Therefore, $\bs Q_N$ is \textit{doubly stochastic}.

\item As was noted in \cite{rumschitzki1987spectral} and further scrutinized in \cite{dress1988evolution}, one has a natural recursive process
\begin{equation}\label{eq1:2}
    \bs Q_{N+1}= \bs Q_1\otimes \bs Q_{N},\quad N=1,2\ldots,
\end{equation}
where
$$
\bs Q_1=\begin{bmatrix}
         q & 1-q \\
         1-q & q \\
       \end{bmatrix}.
$$

Here
$$
\bs C=\bs A\otimes \bs B
$$
denotes the \textit{Kronecker product} $\bs C=(c_{ij})_{mp\times nq}$ of matrices $\bs A=(a_{ij})_{m\times n}$ and $\bs B=(b_{ij})_{p\times q}$, defined as
$$
\bs A\otimes\bs B=\begin{bmatrix}
                    a_{11}\bs B & \ldots & a_{1n}\bs B \\
                    \vdots & \ddots & \vdots \\
                    a_{m1}\bs B & \ldots & a_{mn}\bs B \\
                  \end{bmatrix}.
$$

If $\bs A$ and $\bs B$ are square then $\bs A\otimes \bs B$ is also square. The Kronecker product is associative and noncommutative.

The most useful for us property of the Kronecker product is given in the following theorem (see, e.g.,~\cite{laub2005matrix} for more details):
\begin{theorem}\label{theor:4:1}Let $\bs A\in\R^{n\times n}$ have eigenvalues $\lambda_1,\ldots,\lambda_n$, and let $\bs B\in\R^{m\times m}$ have eigenvalues $\mu_1,\ldots,\mu_m$. Then $mn$ eigenvalues of $\bs A\otimes \bs B$ are
$$
\lambda_1\mu_1,\lambda_1\mu_2,\ldots, \lambda_1\mu_n,\lambda_2\mu_1,\ldots,\lambda_m\mu_n.
$$
Moreover, if $\bs x_1,\ldots \bs x_p,\,p\leq m$ are linearly independent right eigenvectors of $\bs A$ corresponding to $\lambda_1,\ldots,\lambda_p$, and $\bs y_1,\ldots,\bs y_q,q\leq n$ are linearly independent right eigenvectors of $\bs B$ corresponding to $\mu_1,\ldots, \mu_q$, then $\bs x_i\otimes\bs y_j\in\R^{mn}$ are linearly independent right eigenvectors of $\bs A\otimes \bs B$ corresponding to $\lambda_i\mu_j$.
\end{theorem}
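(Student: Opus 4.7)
The plan is to build everything on the single identity that makes the Kronecker calculus work, namely the mixed-product property
\begin{equation*}
(\bs A\otimes \bs B)(\bs C\otimes \bs D)=(\bs{AC})\otimes(\bs{BD})
\end{equation*}
whenever the dimensions are such that the inner products $\bs{AC}$ and $\bs{BD}$ are defined. First I would check this property by a direct block-matrix computation from the definition of the Kronecker product; it is a routine, purely algebraic verification, but the rest of the proof is just a sequence of specializations of it.

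Specializing $\bs C$ and $\bs D$ to column vectors and taking $\bs C=\bs x_i$, $\bs D=\bs y_j$ with $\bs A\bs x_i=\lambda_i\bs x_i$ and $\bs B\bs y_j=\mu_j\bs y_j$, the identity instantly gives
\begin{equation*}
(\bs A\otimes \bs B)(\bs x_i\otimes \bs y_j)=(\bs A\bs x_i)\otimes(\bs B\bs y_j)=\lambda_i\mu_j\,(\bs x_i\otimes \bs y_j),
\end{equation*}
so $\bs x_i\otimes \bs y_j$ is an eigenvector of $\bs A\otimes \bs B$ with eigenvalue $\lambda_i\mu_j$. To get linear independence of the family $\{\bs x_i\otimes \bs y_j\}_{i\leq p,j\leq q}$, I would assume a vanishing linear combination $\sum_{i,j}c_{ij}\bs x_i\otimes \bs y_j=\bs 0$, rewrite it in block form as the vector whose $i$-th block is $\bigl(\sum_j c_{ij}\bs y_j\bigr)$ scaled by the entries of $\bs x_i$, and use the independence of the $\bs x_i$ to pull out, for each fixed $j$-coordinate, a linear combination of the $\bs y_j$ that must vanish; the independence of the $\bs y_j$ then forces all $c_{ij}=0$. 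This handles the second half of the statement.

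For the eigenvalue count in the first half, the cleanest route is a Schur-type triangularization rather than reasoning directly with eigenvectors (which could fail to form a basis in the non-diagonalizable case). Writing Schur decompositions $\bs A=\bs U\bs T_A\bs U^{*}$, $\bs B=\bs V\bs T_B\bs V^{*}$ with $\bs T_A,\bs T_B$ upper triangular and carrying the eigenvalues $\lambda_i,\mu_j$ on their diagonals, two further applications of the mixed-product identity yield
\begin{equation*}
\bs A\otimes \bs B=(\bs U\otimes \bs V)(\bs T_A\otimes \bs T_B)(\bs U\otimes \bs V)^{*},
\end{equation*}
together with $(\bs U\otimes \bs V)(\bs U\otimes \bs V)^{*}=\bs I$, so $\bs U\otimes \bs V$ is unitary. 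Then I would observe that the Kronecker product of two upper-triangular matrices is upper triangular with diagonal entries precisely the $mn$ products $\lambda_i\mu_j$, so this similarity exhibits the spectrum of $\bs A\otimes \bs B$ as claimed.

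I do not expect a genuine obstacle; the only real care is notational, namely keeping the block indexing consistent in the linear-independence argument and confirming that the Kronecker product preserves upper-triangularity and unitarity. Everything reduces to the mixed-product property, and once that is in hand each subsequent claim is a one-line consequence.
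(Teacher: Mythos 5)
Your proof is correct: the mixed-product identity $(\bs A\otimes\bs B)(\bs C\otimes\bs D)=(\bs{AC})\otimes(\bs{BD})$ yields both the eigenvector relation and (via the block-indexing argument) the linear independence of the $\bs x_i\otimes\bs y_j$, while the complex Schur triangularization, with $\bs U\otimes\bs V$ unitary and $\bs T_A\otimes\bs T_B$ upper triangular, correctly accounts for all $mn$ eigenvalues with multiplicity even in the non-diagonalizable case. The paper states this theorem without proof, citing \cite{laub2005matrix}, and your argument is essentially the standard one found there, so the two approaches coincide.
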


This theorem together with the fact that matrix $\bs Q_1$ has eigenvalues $1$ and $2q-1$ implies that $\bs Q_N$ has eigenvalues
$1,2q-1,(2q-1)^2,\ldots,(2q-1)^N$ with multiplicities $\binom{N}{0},\binom{N}{1},\binom{N}{2},\ldots,\binom{N}{N}$ respectively. This can be written succinctly as
$$
\lambda_i=(2q-1)^{\Ham{i}},\quad i=0,\ldots, l-1,
$$
where $\Ham{i}:=H_{0i}$ is the \textit{Hamming norm} of (number of 1s in the binary representation of) sequence type $i$.

\item From the previous it immediately follows that
$$
\det \bs Q_N=(2q-1)^{2^{N-1}N}.
$$
In particular, for $2q\neq1$, $\bs Q_N$ is invertible.

\item Two linearly independent eigenvectors of $\bs Q_1$ can be taken as $(1,1)^\top$ and $(1,-1)^\top$. Therefore, by Theorem \ref{theor:4:1}, we find that $\bs Q_N$ has exactly $l=2^N$ linearly independent eigenvectors, and the $i$-th eigenvector of $\bs Q_N$ can be found as the $i$-th column of matrix $\bs T_N$:
$$
\bs T_{N+1}=\bs T_1\otimes \bs T_{N},\quad N=1,2,\ldots,
$$
where
$$
\bs T_1=\begin{bmatrix}
          1 & 1 \\
          1 & -1 \\
        \end{bmatrix}.
$$
In other words,
$$
\bs T_{N}^{-1}\bs Q_N\bs T_N=\diag\bigl(1,2q-1,\ldots,(2q-1)^{H_i},\ldots,(2q-1)^N\bigr).
$$
\end{enumerate}
Now consider some properties of matrix $\bs T_N$:
\begin{enumerate}
\item Matrix $\bs T_N$ is symmetric as the Kronecker product of two symmetric matrices.
\item The inverse $\bs T^{-1}_N$ can be found from the identity
$$
\bs T_N^2=(\bs T_1\otimes\bs T_{N-1})(\bs T_1\otimes\bs T_{N-1})=\bs T_1^2\otimes \bs T_{N-1}^2=\begin{bmatrix}
                                                                                                        2 & 0 \\
                                                                                                        0 & 2 \\
                                                                                                      \end{bmatrix}\otimes \bs T_{N-1}^2.
$$
Therefore, $\bs T_N^2=2^N\bs I,$ and hence
$$
\bs T^{-1}_{N}=2^{-N}\bs T_N.
$$
\item The column (row) with index 0 of $\bs T_N$ is composed of all 1s, all other columns and rows have equal numbers of 1s and $-1$s.
\item The determinant of $\bs T_N$ is
$$
\det \bs T_N=(-2)^{2^{N-1}N}.
$$
\end{enumerate}

\subsection{Analysis of the first derivative of $\overline{w}(q)$}\label{sec:4:2}Here we embark on the study of the basic eigenvalue problem, which we write here again for convenience,
\begin{equation}\label{eq3:1}
    \bs{QWp}=\overline{w} \bs p,
\end{equation}
where $\bs p=(p_0,\ldots,p_{l-1})^\top$ is the eigenvector corresponding to the leading eigenvalue $\overline{w}$ with the normalization
\begin{equation}\label{eq3:2}
    \bs p\in S_l,\quad \overline{w}=\bs{w}\cdot \bs p,
\end{equation}
Here $S_l$ is the unit simplex
$$
S_l=\{\bs p\in \R^l\mid \bs p\geq 0,\,\sum_{i=0}^{l-1}p_i=1\},
$$and the dot denotes the standard inner product in $\R^l$.

If the vector (fitness landscape) $\bs w$ is fixed then $\bs p=\bs p(q)$ and $\overline{w}=\overline{w}(q)$ are functions of one parameter $q$, and our goal is to analyze behavior of $\bs p$ and $\overline{w}$ with respect to small perturbations of $q$. In particular, we are interested in finding the absolute minimum of $\overline{w}(q)$, which always exists and is determined by the condition $\overline{w}'(q)=0$. The main tool for our analysis is to first consider the adjoint eigenvalue problem (we call $\bs A^\top \bs y=\lambda \bs y$ the adjoint eigenvalue problem for $\bs A\bs x=\lambda \bs x$), and also perform the computations in the basis of the eigenvectors of $\bs Q$, similar to how we treated one of the close relatives of the Eigen model --- a permutation invariant Crow--Kimura model --- in \cite{bratus2013linear}.

To find $\overline{w}'(q)=\frac{\D \overline{w}}{\D q}(q)$ consider the adjoint eigenvalue problem (we have $(\bs{QW})^\top=\bs{WQ}$)
\begin{equation}\label{eq3:3}
    \bs{WQr}=\overline{w} \bs r,
\end{equation}
where $\bs r=(r_0,\ldots,r_{l-1})^\top$ is an eigenvector of $\bs{WQ}$ corresponding to the eigenvalue $\overline{w}$ for which the normalization
\begin{equation}\label{eq3:4}
    \bs p\cdot \bs r=\sum_{i=0}^{l-1}p_ir_i=1
\end{equation}
holds. Due to normalization \eqref{eq3:4}, we have
$$
\overline{w}=\bs r\cdot\bs{QWp},
$$
which yields
\begin{equation}\label{eq3:5}
    \overline{w}'=\bs r\cdot \bs Q'\bs{Wp}.
\end{equation}
Indeed,
$$
\overline{w}'=\bs p'\cdot \bs{WQr}+\bs p\cdot\bs{WQ}'\bs r+\bs p\cdot \bs{QWr}',
$$
and
$$
\bs p'\cdot \bs{WQr}+\bs p\cdot \bs{QWr}'=\overline{w}\bigl(\bs p'\cdot \bs r+\bs p\cdot \bs r'\bigl)=\overline{w}(\bs p\cdot \bs r)'=0.
$$

Direct calculations yield, for $q\neq0,1$,
\begin{equation}\label{eq3:6}
    \bs Q'=\frac{N}{q}\bs Q-\frac{1}{q(1-q)}\bs{S},
\end{equation}
where the matrix $\bs S=(s_{ij})_{l\times l}$ is defined as follows: $\bs S:=(H_{ij}q_{ij})_{l\times l},$ and $H_{ij}$ are the corresponding Hamming distances. On putting \eqref{eq3:6} into \eqref{eq3:5} we obtain
\begin{align*}
\overline{w}'&=\bs r\cdot \bs Q'\bs W\bs p=\frac{N}{q}\bs r\cdot \bs{QWp}-\frac{1}{q(1-q)}\bs r\cdot \bs{SWp}\\
&=\frac{N\overline{w}}{q}-\frac{1}{q(1-q)}\bs r\cdot \bs{SWp}.
\end{align*}
From \eqref{eq3:1} and the fact that $\bs Q$ is invertible (if $2q\neq 1$), one has $\bs{Wp}=\overline{w}\bs Q^{-1}\bs p$. Therefore,
\begin{equation}\label{eq3:7}
   \overline{w}'=\frac{N\overline{w}}{q}-\frac{\overline{w}}{q(1-q)}\bs r\cdot \bs{S}\bs Q^{-1}\bs p.
\end{equation}
Hence the condition $\overline{w}'(q)=0$ implies
\begin{equation}\label{eq3:8}
    N(1-q)=\bs r\cdot \bs S\bs Q^{-1}\bs p.
\end{equation}

Now let us rewrite problem \eqref{eq3:1} in the basis of the eigenvectors of $\bs Q$. The matrix of the change of basis is given by $\bs T$, where the $i$-th column of $\bs T$ is the $i$-th eigenvector of $\bs Q$ corresponding to the eigenvalue $\lambda_i=(2q-1)^{\Ham{i}}$, $\Ham{i}$ is the Hamming norm of the sequence type $i$ (see Section \ref{propQ} for details). We have
$$
\bs T^{-1}\bs{QWp}=\overline{w} \bs T^{-1}\bs p,
$$
or
$$
\bs T^{-1}\bs{Q}\bs T \bs T^{-1}\bs W\bs T \bs T^{-1}\bs p=\overline{w} \bs T^{-1}\bs p.
$$
Now, using the fact that $\bs T^{-1}\bs Q\bs T=\bs D=\diag\bigl(1,2q-1,\ldots,(2q-1)^{\Ham{i}},\ldots,(2q-1)^N\bigr)$ (note that $\bs D$ is an $l\times l$ matrix, for the exact form see Section \ref{propQ}) and notations
$$
\bs F:=\bs T^{-1}\bs W\bs T=2^{-N}\bs{TWT},\quad \bs x:=\bs T^{-1}\bs p,
$$
we rewrite our original problem \eqref{eq3:1} in the form
\begin{equation}\label{eq3:9}
    \bs{DFx}=\overline{w}\bs x.
\end{equation}
The normalization condition for $\bs x$ reads
$$
x_0=\frac{1}{2^N}\sum_{i=0}^{l-1}p_i=\frac{1}{2^N}\,.
$$

Again, consider the adjoint problem for \eqref{eq3:9} (now $(\bs{DF})^\top=\bs{FD}$):
\begin{equation}\label{eq3:10}
    \bs{FDy}=\overline{w} \bs y,
\end{equation}
for $\bs y=(y_0,\ldots,y_{l-1})^\top$ such that $\bs x\cdot \bs y=1$. We have
$$
\overline{w}=\bs y\cdot \bs{DFx},
$$
and hence
$$
\overline{w}'=\bs y\cdot\bs D'\bs{Fx}.
$$
Since
$$
\bs D'=2\diag(0,1,\ldots,\Ham{i}(2q-1)^{\Ham{i}-1},\ldots,N(2q-1)^{N-1}),
$$
we can, for $2q\neq 1$, write $\bs{Fx}=\overline{w} \bs D^{-1}\bs x$, and therefore obtain
$$
\overline{w}'=\bs y\cdot\bs D'\bs D^{-1}\overline{w}\bs x=\frac{2\overline{w}}{2q-1}\bs y\cdot \diag(0,1,\ldots,\Ham{i},\ldots,N)\bs x,
$$
or, in coordinates,
\begin{equation}\label{eq3:11}
    \left(q-\frac 12\right)\frac{\overline{w}'}{\overline{w}}=\sum_{i=0}^{l-1}\Ham{i}x_iy_i.
\end{equation}
Therefore the condition $\overline{w}'(q)=0$ implies
\begin{equation}\label{eq3:12}
    \sum_{i=0}^{l-1}\Ham{i}x_iy_i=0.
\end{equation}
Both conditions \eqref{eq3:8} and \eqref{eq3:12} can be used for search of the minimum of $\overline{w}(q)$.

\subsection{Case $q=0.5$}Here we show that for $q=0.5$ it is always true that $\overline{w}'(0.5)\geq 0$.

Direct calculations show that if $q=0.5$ then
$$
\bs Q=2^{-N}\bs E,\quad \bs Q'=2^{1-N}(N\bs E-2\bs H),
$$
where $\bs E$ is the matrix with all the entries equal to 1 and $\bs H=(H_{ij})$ is the matrix of Hamming distances.

In this case we can explicitly find the vectors $\bs p$ and $\bs r$ in \eqref{eq3:5}. Indeed, it can be checked that
\begin{equation}\label{eq4:2}
    \overline{w}(0.5)=2^{-N}\sum_{i=0}^{l-1}w_i,\quad \bs p^\top=2^{-N}(1,\ldots,1),\quad \bs r^\top=\frac{1}{\overline{w}(0.5)}\bs w^\top.
\end{equation}
Therefore, \eqref{eq3:5} implies
\begin{align*}
\overline{w}'(0.5)&=\bs r\cdot \bs Q'\bs W\bs p=\frac{1}{\overline{w}(0.5)2^{N-1}2^N}\,\bs w\cdot (N\bs E-2\bs H)\bs W\bs 1\\
&=\frac{1}{\overline{w}(0.5)2^{2N-1}}\,\bs w^\top(N\bs E-2\bs H) \bs w,
\end{align*}
where $\bs 1=(1,\ldots,1)^\top\in\R^l$.

Consider the quadratic form
$$
f\colon \bs w\to \bs w^\top(N\bs E-2\bs H) \bs w.
$$
This quadratic form is nonnegative. Indeed, it can be proved by induction that matrix $(N\bs E-2\bs H)$ has $N$ positive eigenvalues, each equal to $2^N$, and $2^N-N$ zero eigenvalues. Therefore, we have
\begin{proposition}\label{prop4:3}
\begin{equation}\label{eq3:5a}
    \overline{w}'(0.5)=\frac{1}{\overline{w}(0.5)2^{2N-1}}f(\bs w)\geq 0.
\end{equation}
\end{proposition}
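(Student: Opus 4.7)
The closed-form identity $\overline{w}'(0.5) = \frac{1}{\overline{w}(0.5)\,2^{2N-1}} f(\bs w)$ is already established in the paragraph preceding the proposition, by substituting the explicit leading eigenvectors $\bs p = 2^{-N}\bs 1$ and $\bs r = \bs w/\overline{w}(0.5)$ (both of which one reads off by inspection from $\bs Q|_{q=0.5} = 2^{-N}\bs E$) into the general identity \eqref{eq3:5}. The remaining mathematical content of the proposition is therefore the assertion that $f(\bs w) \geq 0$, i.e., that the matrix $\bs A_N := N\bs E - 2\bs H$ is positive semidefinite.

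The cleanest route I see is to encode sequences as $\pm 1$-vectors. For each index $i \in \{0, \ldots, 2^N - 1\}$, let $\bs v_i \in \R^N$ be obtained from the binary representation of $i$ by the substitution $0 \mapsto +1$, $1 \mapsto -1$. A direct count of coordinate-wise agreements against disagreements yields $\bs v_i \cdot \bs v_j = (N - H_{ij}) - H_{ij} = N - 2H_{ij}$, which is precisely the $(i,j)$ entry of $\bs A_N$. Consequently $\bs A_N = V V^\top$, where $V \in \R^{2^N \times N}$ is the matrix whose rows are $\bs v_i^\top$, and positive semidefiniteness is manifest, proving $f(\bs w) \geq 0$. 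The sharper spectral claim quoted in the surrounding discussion (eigenvalue $2^N$ of multiplicity $N$ and $0$ of multiplicity $2^N - N$) is also immediate from the same factorization: a standard character-sum computation gives $V^\top V = 2^N \bs I_N$, so the $N$ nonzero eigenvalues of $VV^\top$ are all equal to $2^N$.

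As a consistency check that harmonizes with the paper's machinery, one can instead invoke that the Hadamard-type matrix $\bs T_N$ from Section \ref{propQ} diagonalizes $\bs Q_N$ for every $q$, and hence also its $q$-derivative. A direct calculation from \eqref{eq1:1} gives $\bs Q_N'|_{q=0.5} = 2^{1-N}\bs A_N$, while at $q = 0.5$ only the indices of Hamming norm $1$ contribute to the diagonal matrix $\bs T_N^{-1}\bs Q_N'\bs T_N$; using $\bs T_N^{-1} = 2^{-N}\bs T_N$ one recovers the same spectrum and the same PSD conclusion. I anticipate no real obstacle: once $\bs A_N$ is recognized as the Gram matrix of the sign vectors $\bs v_i$, both the inequality and the eigenvalue count drop out in essentially one line, and the only mild subtlety in the alternative approach is the $0^0$ convention implicit in differentiating $(2q-1)^{H_i}$ at $q = 0.5$, which is most safely sidestepped by differentiating $\bs Q_N$ before conjugating by $\bs T_N$.
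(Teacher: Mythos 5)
Your proof is correct, and the nonnegativity half is handled by a genuinely different argument than the paper's. The identity $\overline{w}'(0.5)=\frac{1}{\overline{w}(0.5)2^{2N-1}}\,\bs w^\top(N\bs E-2\bs H)\bs w$ you obtain exactly as the paper does: substitute $\bs p=2^{-N}\bs 1$, $\bs r=\bs w/\overline{w}(0.5)$ (with $\bs p\cdot\bs r=1$, as \eqref{eq3:5} requires) and $\bs Q'|_{q=0.5}=2^{1-N}(N\bs E-2\bs H)$ into \eqref{eq3:5}, using $\bs W\bs 1=\bs w$. Where you diverge is on $f(\bs w)\geq 0$: the paper merely asserts that ``it can be proved by induction'' (presumably via the Kronecker recursion $\bs Q_{N+1}=\bs Q_1\otimes\bs Q_N$ and the corresponding block structure of $\bs H$) that $N\bs E-2\bs H$ has eigenvalue $2^N$ with multiplicity $N$ and eigenvalue $0$ with multiplicity $2^N-N$, and it never writes the induction out. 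Your Gram factorization closes that gap completely: with $\bs v_i\in\{\pm1\}^N$ the sign encoding of the binary string $i$, one indeed has $\bs v_i\cdot\bs v_j=(N-H_{ij})-H_{ij}=N-2H_{ij}$, so $N\bs E-2\bs H=VV^\top$ is manifestly positive semidefinite, and your computation $V^\top V=2^N\bs I_N$ (the off-diagonal character sums vanish because each pair of sign patterns occurs $2^{N-2}$ times) gives the full spectrum at once, since the nonzero eigenvalues of $VV^\top$ and $V^\top V$ coincide and $\rank V=N$. Thus your argument is both more elementary and strictly more complete than what the paper records. Your secondary check via $\bs T_N$ is also sound and is really the paper's own machinery in disguise: since the eigenvectors of $\bs Q_N$ are $q$-independent, $\bs T_N^{-1}\bs Q_N'\bs T_N=\bs D'$, which at $q=0.5$ equals $2\diag(\delta_{H_i,1})$, recovering eigenvalues $2^{N-1}\cdot 2=2^N$ on the $N$ indices of Hamming norm one; your care about the notational $0^0$ in $H_i(2q-1)^{H_i-1}$ at $H_i=1$ is appropriate but harmless, as the entry is just the derivative of $(2q-1)^1$.
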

A simple corollary to the last proposition is that the conditions
$$
\overline{w}'(0.5)=0
$$
and
$$
(N\bs E-2\bs H) \bs w=0
$$
are equivalent. If $(N\bs E-2\bs H) \bs w\neq 0$ then $\overline{w}'(0.5)>0$.

\begin{remark}It can be checked that the sum of the elements in every row and column of $(N\bs E-2\bs H)$ is zero. This implies that condition $\overline{w}'(0.5)=0$ is true for any ``symmetric'' vectors $\bs w$, which have $w_i=w_{i^\ast}$ for all coordinates, $i^\ast=2^N-1-i$.
\end{remark}

\subsection{On the minimum of $\overline{w}(q)$}
Here we assume that the fitness landscape $\bs W$ is such that the eigenvalues, which can be directly calculated, $\overline{w}(0)=\max\{\sqrt{w_i w_{i^\ast}}\}$ and $\overline{w}(1)=\max\{w_i\}$ have multiplicity one. In this case from \eqref{eq3:5} it follows that (see also Section \ref{sec:4:6})
\begin{equation}\label{eq4:1}
    \overline{w}'(0)=-N\overline{w}(0)<0,\quad \overline{w}'(1)=N\overline{w}(1)>0.
\end{equation}
This means that for $q\in[0,1]$ function $\overline{w}(q)$ decreases in the right neighborhood of zero and increases in the left neighborhood of 1. This, together with the continuity of $\overline{w}(q)$, implies that there exists an absolute minimum
$$
\hat{w}:=\mathrm{absmin}\,\{\overline{w}(q)\mid 0<q<1\}=\overline{w}(\hat{q}),
$$
and the coordinate $\hat{q}$ of this minimum is determined from $\overline{w}'(\hat{q})=0$. The analysis in Section \ref{sec:4:2} shows that this condition implies \eqref{eq3:8} and \eqref{eq3:12}, which can used to computations (another approach, which is convenient for small values of $N$, is outlined in Appendix \ref{append:B}).

\subsection{Analysis of the second derivative of $\overline{w}(q)$}\label{sec:4:5}
In this subsection we continue the general analysis of $\overline{w}(q)$. In particular we concentrate on the second derivative of this function and show that it is convex when $q>0.5$, provided some mild conditions on $\bs w$ are satisfied.

Together with the fitness landscape
$$
\bs W=\diag(w_0,\ldots,w_{l-1})
$$
consider a diagonal matrix
$$
\sqrt\bs{{W}}:=\diag (\sqrt{w_0},\ldots,\sqrt{w_{l-1}}).
$$
Then
$$
\bs{QW}=\sqrt{\bs W}^{-1}\sqrt{\bs W}\bs{Q}\sqrt{\bs W}\sqrt{\bs W},
$$
i.e., matrix $\bs{QW}$ is similar to the symmetric matrix $\sqrt{\bs W}\bs{Q}\sqrt{\bs W}$, and hence its eigenvalues are all real, and the eigenvectors form a basis of $\R^l$ (this observation was made in~\cite{rumschitzki1987spectral}).

Recall from Section \ref{sec:4:2} that together with the matrix $\bs{QW}$ we consider matrix $\bs{DF}$, for which we define
$$
\sqrt{\bs F}:=\bs T^{-1}\sqrt{\bs W}\bs T.
$$
A symmetric matrix
\begin{equation}\label{eq7:1}
    \bs L:=\sqrt{\bs F}\bs D\sqrt{\bs F}
\end{equation}
is similar to $\bs{DF}$, and hence to $\bs{QW}$, and has the same eigenvalues, the maximal of which, $\overline{w}(q)$, is of primary interest to us.

Let $\bs z=\bs z(q)$ be the eigenvector of $\bs L$ corresponding to $\overline{w}(q)$ satisfying the condition $\bs z\cdot \bs z=1$. Note that $\bs z=\sqrt{\bs F}\bs x=\sqrt{\bs F}\,\bs T^{-1}\bs p$.  Due to the normalization,
\begin{equation}\label{eq7:2}
    \overline{w}=\bs z\cdot \bs{Lz}.
\end{equation}
By differentiating \eqref{eq7:2} with respect to $q$, we obtain
$$
\overline{w}'=\bs z'\cdot \bs{Lz}+\bs z\cdot \bs L'\bs z+\bs z\cdot \bs L\bs z'.
$$
Noting that
$$
\bs z'\cdot \bs{Lz}+\bs z\cdot \bs L\bs z'=\overline{w}(\bs z'\cdot \bs z+\bs z\cdot \bs z')=\overline{w}(\bs z\cdot \bs z)'=0,
$$
we get
\begin{equation}\label{eq7:3}
\overline{w}'=\bs z\cdot \bs{L}'\bs z.
\end{equation}
Now we differentiate \eqref{eq7:3} with respect to $q$:
$$
\bs{w}''=\bs z'\cdot\bs{L}'\bs z+\bs z\cdot \bs{L}''\bs z+\bs z\cdot\bs{L}'\bs z'=2\bs z'\cdot\bs{L}'\bs z+\bs z\cdot \bs{L}''\bs z.
$$
$\bs{Lz}=\overline{w}\bs z$ implies $\bs L'\bs z+\bs L\bs z'=\overline{w}'\bs z+\overline{w}\bs z',$ therefore, by expressing $\bs L'\bs z$ in the last equality, we get
$$
\bs{w}''=2 \bs z'\cdot(\overline{w}\bs I-\bs L)\bs z'+\bs z\cdot \bs{L}''\bs z,
$$
since $2\bs z'\cdot \bs z=(\bs z\cdot \bs z)'=0$.

Due to the fact that $\bs L$ is symmetric with the maximal eigenvalue $\overline{w}$, matrix $\overline{w}\bs I-\bs L$ is nonnegative definite, and hence the quadratic form
$$
f_1\colon \bs v\to 2 \bs v\cdot(\overline{w}\bs I-\bs L)\bs v
$$
is nonnegative on any vector $\bs v\in \R^l$. In particular, $f_1(\bs z')\geq 0$.

Now consider $\bs z\cdot \bs L''\bs z$. Since $\bs F$ is constant, then $\bs L''=\sqrt{\bs F}\bs D''\sqrt{\bs F}$. Using the explicit form of $\bs D$, we find first
$$
\bs D'=\frac{2}{2q-1}\bs D\Delta,
$$
where $\Delta=\diag(0,1,\ldots,H_i,\ldots,N)$ is a constant matrix. Furthermore,
$$
\bs D''=\frac{4}{(2q-1)^2}\bs D(\Delta^2-\Delta),
$$
from where it follows that the quadratic form
$$
f_2\colon \bs v\to\frac{4}{(2q-1)^2}\bs v\cdot\bs D(\Delta^2-\Delta)\bs v
$$
is nonnegative definite for $q>0.5$ (because $\bs D$ is positive definite and $\Delta^2-\Delta$ is nonnegative definite). Therefore,
$$
\bs z\cdot\bs{L''z}=f_2(\sqrt{\bs F}\bs z)\geq 0,
$$
and we obtain
\begin{proposition} If $q\geq 0.5$ then $\overline{w}(q)$ is nondecreasing and convex.
\end{proposition}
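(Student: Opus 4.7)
The plan is to realize $\overline{w}(q)$ as the largest eigenvalue of the symmetric matrix $\bs L(q) = \sqrt{\bs F}\,\bs D(q)\,\sqrt{\bs F}$ from \eqref{eq7:1} (which is similar to $\bs{QW}$ and therefore shares its spectrum) and then run a standard Rayleigh-quotient perturbation computation. Writing $\overline{w} = \bs z\cdot\bs{Lz}$ for a smoothly chosen unit-norm leading eigenvector $\bs z(q)$, a first differentiation combined with $\bs{Lz}=\overline{w}\bs z$ and $(\bs z\cdot\bs z)'=0$ gives $\overline{w}' = \bs z\cdot\bs L'\bs z$, and a second differentiation, after using the same identities to substitute for $\bs L'\bs z$, produces
$$\overline{w}'' \;=\; 2\,\bs z'\cdot(\overline{w}\bs I - \bs L)\bs z' \;+\; \bs z\cdot\bs L''\bs z.$$

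For convexity on $[1/2,1]$, I would show both summands are nonnegative. The first is nonnegative on all of $[0,1]$, since $\overline{w}$ is the maximal eigenvalue of the symmetric matrix $\bs L$, so $\overline{w}\bs I - \bs L$ is positive semidefinite. For the second, $\bs F$ is independent of $q$, so $\bs L'' = \sqrt{\bs F}\,\bs D''\,\sqrt{\bs F}$; a direct calculation from $\bs D = \diag\bigl((2q-1)^{H_i}\bigr)$ yields $\bs D'' = \tfrac{4}{(2q-1)^2}\,\bs D(\Delta^2-\Delta)$ with $\Delta = \diag(H_0,\ldots,H_{l-1})$. For $q > 1/2$ the matrix $\bs D$ is positive definite, while $\Delta^2-\Delta = \diag\bigl(H_i(H_i-1)\bigr)$ is diagonal with nonnegative entries because each $H_i$ is a nonnegative integer; the two diagonal factors commute, so their product is positive semidefinite, whence $\bs z\cdot\bs L''\bs z \geq 0$. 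Thus $\overline{w}''(q)\geq 0$ for $q > 1/2$, and continuity extends the inequality to $q=1/2$.

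The nondecreasing statement then follows by combining this convexity with Proposition~\ref{prop4:3}, which already gives $\overline{w}'(1/2)\geq 0$: convexity on $[1/2,1]$ makes $\overline{w}'$ itself nondecreasing on that interval, so $\overline{w}'(q)\geq\overline{w}'(1/2)\geq 0$ for all $q\in[1/2,1]$. The main subtlety is the apparent $(2q-1)^{-2}$ pole in $\bs D''$: indices with $H_i\in\{0,1\}$ make $\Delta^2-\Delta$ vanish at the right entries to absorb the pole, so $\bs D''$ extends continuously through $q=1/2$, and the semidefiniteness argument passes to the limit. One also tacitly uses analyticity of the simple top eigenvalue to justify the smoothness of $\bs z(q)$ underlying the Hellmann–Feynman style computation, which holds under the standing multiplicity-one hypothesis on $\overline{w}$.
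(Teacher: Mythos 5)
Your proof is correct and follows essentially the same route as the paper's: the symmetric similarity transform to $\bs L=\sqrt{\bs F}\bs D\sqrt{\bs F}$, the identity $\overline{w}''=2\,\bs z'\cdot(\overline{w}\bs I-\bs L)\bs z'+\bs z\cdot\bs L''\bs z$, positive semidefiniteness of both terms via $\bs D''=\tfrac{4}{(2q-1)^2}\bs D(\Delta^2-\Delta)$, and the appeal to Proposition~\ref{prop4:3} for $\overline{w}'(0.5)\geq 0$. Your explicit handling of the removable singularity of $\bs D''$ at $q=1/2$ and of eigenvector smoothness are welcome refinements the paper leaves implicit, but the argument is the same.
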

\begin{proof} Due to Proposition \ref{prop4:3} we have $\overline{w}'(0.5)\geq 0$. Since $\overline{w}''(q)\geq 0$ for $q\in(0.5,1]$ then $\overline{w}'(q)\geq\overline{w}'(0.5)\geq 0$. Therefore, $\overline{w}(q)$ is nondecreasing and convex.
\end{proof}
\begin{corollary}The absolute minimum $\hat{q}$ of $\overline{w}(q)$ satisfies the condition
$$
0\leq \hat{q}\leq 0.5.
$$
\end{corollary}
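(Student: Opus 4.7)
The plan is to deduce this immediately from the preceding proposition. Since that proposition establishes that $\overline{w}(q)$ is nondecreasing on $[0.5,1]$, we have $\overline{w}(q)\geq \overline{w}(0.5)$ for every $q\in[0.5,1]$. Consequently, the infimum of $\overline{w}$ over $[0,1]$ is already attained on the subinterval $[0,0.5]$: any candidate minimizer $q_0\in(0.5,1]$ can be replaced by $0.5$ without increasing the value of $\overline{w}$. Existence of a minimizer $\hat{q}$ itself is not an issue, as $\overline{w}$ is continuous on the compact interval $[0,1]$ (and, under the multiplicity hypotheses used earlier, Theorem~\ref{theor:main:1} already asserts existence).

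The only step worth spelling out is the reduction ``minimum on $[0,1]$ $=$ minimum on $[0,0.5]$''. Concretely, I would write: for any $q\in[0.5,1]$, the monotonicity gives $\overline{w}(q)\geq \overline{w}(0.5)\geq \min_{q'\in[0,0.5]}\overline{w}(q')$, so the global minimum cannot be attained strictly to the right of $0.5$, which forces $\hat{q}\in[0,0.5]$.

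There is essentially no obstacle here; the content is packed into the preceding proposition (itself resting on Proposition~\ref{prop4:3} to control the sign of $\overline{w}'(0.5)$ and on the nonnegativity of $\overline{w}''(q)$ for $q>0.5$ established in Section~\ref{sec:4:5}). If one wanted the sharper strict inequality $0<\hat{q}$, it would suffice to invoke $\overline{w}'(0)=-N\overline{w}(0)<0$ from \eqref{eq4:1}, which rules out $\hat{q}=0$; but since the corollary as stated only claims the non-strict bound $0\leq \hat{q}\leq 0.5$, no further argument is required.
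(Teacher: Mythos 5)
Your proposal is correct and matches the paper's (implicit) argument exactly: the corollary is stated as an immediate consequence of the preceding proposition, with the monotonicity of $\overline{w}(q)$ on $[0.5,1]$ forcing any global minimizer into $[0,0.5]$, and existence already guaranteed by continuity together with the sign conditions $\overline{w}'(0)<0$, $\overline{w}'(1)>0$ from \eqref{eq4:1}. Your extra remark that $\hat{q}=0$ is excluded by $\overline{w}'(0)<0$ is consistent with the paper's Theorem~\ref{theor:main:1}, which indeed states the sharper bound $0<\hat{q}\leq 0.5$.
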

\begin{remark}The example with $\bs w=(1,2,2,1,2,1,1,2)$ shows that $\overline{w}(q)$ can change the convexity after passing $q=0.5$ (see Figure \ref{fig:5}).
\begin{figure}[!ht]
\centering
\includegraphics[width=0.45\textwidth]{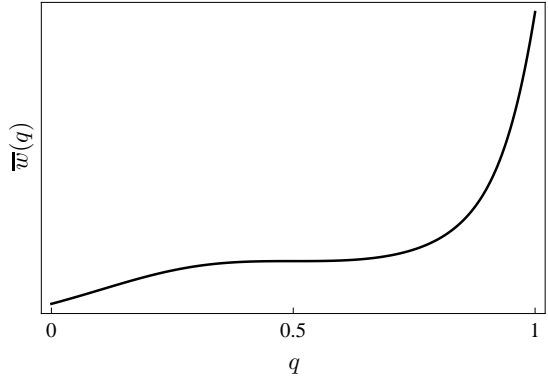}
\caption{An example of a fitness landscape when $\overline{w}''(q)<0$ for $q<0.5$}\label{fig:5}
\end{figure}
\end{remark}
Putting together the observations and analysis in Sections \ref{propQ}--\ref{sec:4:5}, we obtain a proof of Theorem~\ref{theor:main:1}.

In conclusion of this section we mention that another theoretical approach to analyze the second derivative of  $\overline{w}(q)$ is presented in Appendix \ref{append:C}.

\subsection{Calculations of $\overline{w}(q),\overline{w}'(q),\overline{w}''(q)$ for $q=0,0.5,1$}\label{sec:4:6}
In this section we present the calculations, based on the analysis from Sections \ref{propQ}--\ref{sec:4:5}, that lead to Table \ref{tab:main:1}. We also show that it is possible to find expressions for the second derivatives, that we did not included in Table \ref{tab:main:1}.

Recall that we consider the direct
\begin{equation*}
\bs{QWp}=\overline{w}\bs p,\tag{\ref{eq3:1}}
\end{equation*}
and adjoint
\begin{equation*}
\bs{WQr}=\overline{w}\bs r,\tag{\ref{eq3:3}}
\end{equation*}
eigenvalue problems, such that
$$
\bs p\cdot\bs r=1.
$$
We showed
\begin{equation*}
    \overline{w}'=\bs r\cdot \bs Q'\bs{Wp}.\tag{\ref{eq3:5}}
\end{equation*}
By differentiating \eqref{eq3:5}, \eqref{eq3:1}, and \eqref{eq3:3} (the last one with transposing), we have
\begin{align}
\overline{w}''=\bs r'\cdot \bs Q'\bs W\bs p&+\bs r\cdot \bs Q''\bs W\bs p+ \bs r\cdot \bs Q'\bs W\bs p',\label{eqN:3}\\
\bs Q'\bs W \bs p+\bs {QWp}'&=\overline{w}'\bs p+\overline{w}\bs p',\label{eqN:1}\\
\bs (r^\top)'\bs{QW}+\bs r^\top\bs Q'\bs W&=\overline{w}'\bs r^\top+\overline{w}(\bs r^\top)'.\label{eqN:2}
\end{align}
We first multiply \eqref{eqN:1} from the left by $(\bs r^\top)'$, and \eqref{eqN:2} from the right by $\bs p'$ and obtain
\begin{align*}
(\bs r^\top)'\bs Q'\bs W \bs p&=\overline{w}'(\bs r^\top)'\bs p+\overline{w}(\bs r^\top)'\bs p'-(\bs r^\top)'\bs {QWp}',\\
\bs (r^\top)'\bs{Q}'\bs{W}\bs p'&=\overline{w}'\bs r^\top\bs p'+\overline{w}(\bs r^\top)'\bs p'-\bs r^\top\bs Q\bs W\bs p'.
\end{align*}
Plugging these expressions into \eqref{eqN:3} and taking into account
$$
\overline{w}' \bs r'\cdot \bs p+\overline{w}'\bs r\cdot \bs p'=\overline{w}'(\bs r\cdot \bs p)'=0,
$$
we obtain an expression for the second derivative
\begin{equation}\label{eqN:4}
    \overline{w}''=\bs r'\cdot \bs{QWp}+2\bs r'\cdot (\overline{w}\bs I-\bs{QW})\bs p'.
\end{equation}

\paragraph{}\textit{Case $q=1$.} In this case we have
$$
\overline{w}(1)=\max\{w_i\}=w_k,
$$
and we only consider the case $w_i<w_k$ for $i\neq k$. The coordinates of the eigenvectors $\bs p$ and $\bs r$ in \eqref{eq3:1} and \eqref{eq3:3} respectively are found by direct calculations:
$$
p_i=r_i=\delta_{ik},
$$
where $\delta_{ik}$ is the Kronecker delta. Equation \eqref{eq3:5} yields then $\overline{w}'(1)=N\overline{w}(1)$.

To find the second derivative, we use \eqref{eqN:1} and \eqref{eqN:2} together with the conditions
$$
\sum_{i=0}^{l-1}p_i'=0,\quad \sum_{i=0}^{l-1}r_i'p_i+r_ip_i'=0.
$$
We find
$$
p_i'=\begin{cases}
0,&H_{ik}\geq 2,\\[4mm]
\displaystyle-\frac{w_k}{w_k-w_i}\,,& H_{ik}=1,\\[4mm]
\displaystyle\sum_{j\colon H_{jk}=1}\frac{w_k}{w_k-w_j}\,,&H_{ik}=0,
\end{cases}\qquad\text{and}\qquad
r_i'=\begin{cases}
0,&H_{ik}\geq 2,\\[4mm]
\displaystyle-\frac{w_i}{w_k-w_i}\,,& H_{ik}=1,\\[4mm]
\displaystyle-\sum_{j\colon H_{jk}=1}\frac{w_k}{w_k-w_j}\,,&H_{ik}=0.
\end{cases}
$$
After plugging these expressions into \eqref{eqN:4}, we find
$$
\overline{w}''(1)=N(N-1)\overline{w}(1)+2\overline{w}(1)S_k^{(2)}(\bs w),
$$
where
$$
S_k^{(2)}(\bs w)=\sum_{j\colon H_{jk}=1}\frac{w_j}{w_k-w_j}\,.
$$

\paragraph{}\textit{Case $q=0$.} In this case
$$
\overline{w}=\max\{\sqrt{w_iw_{i^\ast}}\}=\sqrt{w_kw_{k^\ast}},
$$
and we assume that $\sqrt{w_iw_{i^\ast}}<\sqrt{w_kw_{k^\ast}}$ for all $i\neq k,k^\ast$.

Here $i^\ast$ is the index conjugate to $i$, formally $i^\ast=2^N-1-i$. We have $(i^\ast)^\ast=i$, and the binary representations of the conjugate indices are related as
$$
i=[a_0,a_1,\ldots,a_{N-1}],\quad i^\ast=[a_0^\ast,a_1^\ast,\ldots,a_{N-1}^\ast], \quad a_k^\ast=1-a_k,\quad a_i,a_i^\ast\in\{0,1\}.
$$
The Hamming distance between two conjugate indices is maximal: $H_{ii^\ast}=N$.

Coordinates of the eigenvectors $\bs p$ and $\bs r$  are found as
$$
p_i=\begin{cases}
0,&i\neq k,k^\ast,\\[3mm]
\displaystyle\frac{\sqrt{w_{k^\ast}}}{\sqrt{w_k}+\sqrt{w_{k^\ast}}}\,,&i=k,\\[3mm]
\displaystyle\frac{\sqrt{w_{k}}}{\sqrt{w_k}+\sqrt{w_{k^\ast}}}\,,&i=k^\ast,
\end{cases}\qquad\text{and}\qquad
r_i=\begin{cases}
0,&i\neq k,k^\ast,\\[3mm]
\displaystyle\frac{\sqrt{w_k}+\sqrt{w_{k^\ast}}}{2\sqrt{w_{k^\ast}}}\,,&i=k,\\[3mm]
\displaystyle\frac{\sqrt{w_k}+\sqrt{w_{k^\ast}}}{2\sqrt{w_k}}\,,&i=k^\ast.
\end{cases}
$$
From \eqref{eq3:5} it then follows that
$$
\overline{w}'(0)=-N\overline{w}(0).
$$

Similarly to the case $q=1$, for $N\geq 3$, we find for the case $H_{ik}=1$:
\begin{align*}
p_i'&=\frac{\sqrt{w_kw_{k^\ast}}w_{k^\ast}p_{k^\ast}+w_{i^\ast}w_kp_k}{w_kw_{k^\ast}-w_iw_{i^\ast}}\,,\quad p_{i^\ast}'=\frac{\sqrt{w_kw_{k^\ast}}w_{k}p_{k}+w_{i}w_{k^\ast}p_{k^\ast}}{w_kw_{k^\ast}-w_iw_{i^\ast}}\,,\\
r_i'&=\frac{\sqrt{w_kw_{k^\ast}}w_{i}r_{k}+w_{i}w_{i^\ast}r_{k^\ast}}{w_kw_{k^\ast}-w_iw_{i^\ast}}\,,\quad r_{i^\ast}'=\frac{\sqrt{w_kw_{k^\ast}}w_{i^\ast}r_{k^\ast}+w_{i^\ast}w_{i}r_{k}}{w_kw_{k^\ast}-w_iw_{i^\ast}}\,.
\end{align*}
The expressions for $p'_k,p'_{k^\ast},r_k',r_{k^\ast}'$ are found from the system of equations
\begin{align*}
p'_k+p'_{k^\ast}&=-\sum_{i\colon H_{ik}=1}(p'_i+p'_{i^\ast}),\quad \sqrt{w_k}p'_k=\sqrt{w_{k^\ast}}p'_{k^\ast},\\
r'_kp_k+r'_{k^\ast}p_{k^\ast} &=- r_kp'_k+r_{k^\ast}p'_{k^\ast},\quad \sqrt{w_{k^\ast}}r'_k=\sqrt{w_{k}}r'_{k^\ast}.
\end{align*}
For all other indices $j$ components $p'_j,r_j'$ are equal to zero.

Plugging the found values into \eqref{eqN:4} and after canceling and simplifications, we obtain
\begin{equation}\label{eqN:5}
\overline{w}''(0)=N(N-3)\overline{w}(0)+2\overline{w}^3(0)S_k^{(0)}(\bs w)+\overline{w}(0)S_k^{(1)}(\bs w),
\end{equation}
where
\begin{align*}
S_k^{(0)}(\bs w)&=\sum_{i\colon H_{ik}=1}\frac{1}{w_kw_{k^\ast}-w_iw_{i^\ast}}\,,\\
S_k^{(1)}(\bs w)&=w_{k^\ast}\sum_{i\colon H_{ik}=1}\frac{w_i}{w_kw_{k^\ast}-w_iw_{i^\ast}}+w_k\sum_{i\colon H_{ik}=1}\frac{w_{k^\ast}}{w_kw_{k^\ast}-w_iw_{i^\ast}}\,.
\end{align*}

\paragraph{}\textit{Case $q=0.5$.} In this case direct calculations lead to
$$
\overline{w}(0.5)=\frac{1}{2^N}\sum_{i=0}^{l-1}w_i,\quad \bs p^\top=\frac{1}{2^N}(1,\ldots,1),\quad \bs r^\top=\frac{1}{\overline{w}(0.5)}(w_0,w_1,\ldots,w_{l-1})=\frac{1}{\overline{w}(0.5)}\bs w^\top.
$$
Equation \eqref{eq3:5} implies
$$
\overline{w}'=\bs r\cdot \bs Q'\bs W\bs p=\frac{1}{2^{2N-1}\overline{w}(0.5)}\bs w\cdot \bs S\bs w,
$$
where $\bs S=N\bs E-2\bs H,$ matrix $\bs E$ is composed of all 1s, and $\bs H=(H_{ij})_{l\times l}$ is the matrix of the Hamming distances.

Consider \eqref{eqN:1}. If $q=0.5$ then it can be calculated that $\bs{QWp}'=\overline{w}'\bs p$. Therefore, $\bs Q'\bs W\bs p=\overline{w}\bs p$, or
$$
\bs p'=\frac{1}{\overline{w}}\bs Q'\bs W\bs p.
$$
Similarly, for $q=0.5$,
$$
\bs{QWp}''=\overline{w}''\bs p.
$$
The last two expressions can be checked with
$$
\overline{w}'=\bs w\cdot\bs p',\quad \overline{w}''=\bs w\cdot \bs p''.
$$
Now differentiate \eqref{eqN:1} with respect to $q$:
$$
\bs Q''\bs{Wp}+2\bs Q'\bs{Wp}'+\bs{QWp}''=\overline{w}''\bs p+2\overline{w}'\bs p'+\overline{w}\bs p'',
$$
or, taking into account the previous equalities,
$$
\bs Q''\bs{Wp}+2\bs Q'\bs{Wp}'=\overline{w}'\bs p'+\overline{w}\bs p''.
$$
From the last expression we can find $\bs p''$ and plug it into $\overline{w}''=\bs w\cdot\bs p''$:
$$
\overline{w}''=\frac{1}{\overline{w}}\bs w\cdot \bs Q''\bs W\bs p+\frac{2}{\overline{w}}\bs w\cdot \bs Q'\bs W\bs p'-\frac{2}{\overline{w}}\bs w\cdot \overline{w}'\bs p',
$$
or, using $\overline{w}'=\bs w\cdot \bs p'$,
$$
\overline{w}''=\frac{1}{\overline{w}}\bs w\cdot \bs Q''\bs W\bs p+\frac{2}{\overline{w}}\bs w\cdot \bs Q'\bs W\bs p'-\frac{2(\overline{w}')^2}{\overline{w}}\,.
$$
If we now use
$$
\bs Q'=2^{1-N}\bs S,\quad \bs Q''=2^{2-N}\bs B,
$$
where $\bs S=N\bs E-2\bs H$, $\bs B=N^2\bs E-N\bs E-4N\bs H+\bs{H^2}$, and $\bs{H^2}=(H_{ij}^2)_{l\times l}$, then we can compute
\begin{equation}\label{eq:N:6}
\overline{w}''(0.5)=\frac{S^{(3)}(\bs w)}{\overline{w}(0.5)2^{2N-4}}+\frac{S^{(4)}(\bs w)}{\overline{w}^2(0.5)2^{3N-5}}-4N(3N+1)\overline{w}(0.5)+12N\overline{w}'(0.5)-\frac{2\bigl(\overline{w}'(0.5)\bigr)^2}{\overline{w}(0.5)}\,,
\end{equation}
where
$$
S^{(3)}(\bs w)=\bs w\cdot \bs{H^2}\bs w
$$
is a quadratic form with matrix $\bs{H^2}$, and
$$
S^{(4)}(\bs w)=\bs w\cdot \bs H\bs W\bs H\bs w,
$$
is a cubic form of the vector $\bs w$.

Putting all the calculations together we obtain Table \ref{tab:N:1}.
\begin{table}[!ht]
  \centering
  \begin{tabular}{| c | c | c | c |}
  \hline&&&\\[-3mm]
  $q$&$0$&$0.5$&$1$\\[1mm]
  \hline&&&\\[-3mm]
  $\overline{w}(q)$&$\max \{\sqrt{w_iw_{i^\ast}}\}$&$\displaystyle\frac{1}{2^N}\sum_{i=0}^{2^N-1}w_i$&$\max\{w_i\}$\\[5mm]
  \hline&&&\\[-3mm]
  $\overline{w}'(q)$ & $-N\max\{\sqrt{w_i w_{i^\ast}}\}$ & $\displaystyle \frac{1}{\overline{w}(0.5)2^{2N-1}}\,\bs w\cdot \bs A\bs w$ & $N\max\{w_i\}$\\[4mm]
  \hline&&&\\[-3mm]
 $\overline{w}''(q)$ &  \eqref{eqN:5} & \eqref{eq:N:6} & $N(N-1)\overline{w}(1)+2\overline{w}(1)S_k^{(2)}(\bs w)$\\[2mm]
 \hline
\end{tabular}
  \caption{Expressions for $\overline{w}(q)$ for particular $q$. Here $i^\ast$ is the index conjugate to $i$, formally, $i^\ast=2^N-1-i$. Matrix $\bs A=N\bs E-2\bs H$, where $\bs E$ is the matrix of all 1s, $\bs H$ is the matrix with elements $H_{ij}$, where $H_{ij}$ is the Hamming distance between sequences $i$ and $j$, and the expressions for $S_k^{(2)}$ is defined in the text}\label{tab:N:1}
\end{table}

\section{Single peaked landscape}\label{sec:5}
Here we apply the results from Section \ref{sec:4} to one specific example of the single peaked landscape.
\subsection{General analysis}
Consider the fitness landscape
\begin{equation}\label{eq5:1}
    \bs w=(w,\ldots,w,w+s,w,\ldots,w),\quad w,s>0,
\end{equation}
where $w+s$ is at the $k$-th place.
The corresponding matrix
$$
\bs W=w\bs I+s\bs E_{k,k},
$$
where $\bs E_{k,k}$ is the matrix with 1 at the intersection of the $k$-th row and the $k$-th column and 0s everywhere else, and $w>0,s>0$. Using Table \ref{tab:N:1} with \eqref{eq5:1}, we can calculate $\overline{w}(q),\overline{w}'(q),\overline{w}''(q)$ for \eqref{eq5:1}, the results are presented in Table \ref{tab:main:2}.

For $q=1$ the eigenvector (quasispecies) $\bs p$ is
$$
p_k(1)=1,\quad p_i(1)=0,\quad i\neq k.
$$
For $q=0$ the eigenvector $\bs p$ is
$$
p_k(0)=\frac{\sqrt{w}}{\sqrt{w+s}+\sqrt{w}}\,,\quad p_{k^\ast}=\frac{\sqrt{w+s}}{\sqrt{w+s}+\sqrt{w}}\,,\quad p_i=0,\,i\neq k,k^\ast.
$$

Using \eqref{eq5:1} in $\bs{QWp}=\overline{w} \bs p$ we find
$$
w\bs Q\bs p+ s\bs Q\bs E_{k,k}\bs p=\overline{w}\bs p.
$$
We also have
$$
\overline{w}=\sum_{i=0}^{l-1}w_ip_i=w+sp_k,
$$
or
$$
\overline{w}-w=sp_k.
$$
The last expression implies that our eigenvalue problem can be rewritten as
\begin{equation}\label{eq5:2}
w\bs Q\bs p+(\overline{w}-w)\bs Q_k=\overline{w}\bs p,
\end{equation}
where $\bs Q_k$ is the $k$-th column of $\bs Q$. Now we rewrite problem \eqref{eq5:2} in the coordinates of the eigenvectors of $\bs Q$. Recall that $\bs T$ is the matrix composed of the eigenvectors of $\bs Q$,
$$
\bs T^{-1}\bs{QT}=\bs D,\quad \bs x=\bs T^{-1}\bs p.
$$

We have from \eqref{eq5:2}
$$
w\bs{T}^{-1}\bs Q\bs{TT}^{-1}\bs p+(\overline{w}-w)\bs T^{-1}\bs Q_k=\overline{w}\bs{T}^{-1}\bs p,
$$
therefore,
\begin{equation}\label{eq5:3}
w\bs D\bs x+\frac{\overline{w}-w}{2^{N}}\bs{DT}_k=\overline{w}\bs x,
\end{equation}
where $\bs T_k$ is the $k$-th column of $\bs T$. The last equality holds because of
$$
\bs T^{-1}\bs Q_k=\bs{T}^{-1}\bs Q\bs e_k=\bs{DT}^{-1}\bs e_k=2^{-N}\bs{DT}\bs e_k=2^{-N}\bs{DT}_k.
$$
Equation \eqref{eq5:3} in coordinates is
$$
w(2q-1)^{\Ham{i}}x_i+\frac{\overline{w}-w}{2^N}(2q-1)^{\Ham{i}}t_{ik}=\overline{w}x_i,\quad i=0,\ldots,l-1,
$$
from where
$$
x_i=\frac{1}{2^N}\frac{(\overline{w}-w)(2q-1)^{\Ham{i}}}{\overline{w}-w(2q-1)^{\Ham{i}}}\,t_{ik},\quad i=0,\ldots,l-1.
$$
We note that for $i=0$ we have, as expected, $x_0=2^{-N}$. Since $\bs p=\bs{Tx}$, we find
\begin{equation}\label{eq5:4}
    p_j=\sum_{i=0}^{l-1}t_{ji} x_i=\frac{1}{2^N}\sum_{i=0}^{l-1}\frac{(\overline{w}-w)(2q-1)^{\Ham{i}}}{\overline{w}-w(2q-1)^{\Ham{i}}}\,t_{ji}t_{ik},\quad j=0,\ldots,l-1.
\end{equation}
Expressions \eqref{eq5:4} allows to determine the quasispecies distribution (the eigenvector $\bs p$) if the eigenvalue $\overline{w}(q)$ is known.

If $j=k$ in \eqref{eq5:4} then this expression simplifies since $t_{ki}t_{ik}=(\pm 1)^2=1$:
$$
p_k=\frac{1}{2^N}\sum_{i=0}^{l-1}\frac{(\overline{w}-w)(2q-1)^{\Ham{i}}}{\overline{w}-w(2q-1)^{\Ham{i}}}\,.
$$
Using the fact that $\overline{w}-w=sp_k$, we obtain
$$
\overline{w}-w=\frac{s}{2^N}\sum_{i=0}^{l-1}\frac{(\overline{w}-w)(2q-1)^{\Ham{i}}}{\overline{w}-w(2q-1)^{\Ham{i}}}\,.
$$

The last equality allows to make several conclusions. First, $\overline{w}=\overline{w}(q)$ is a root of the equation
\begin{equation}\label{eq5:5}
1=\frac{s}{2^N}\sum_{i=0}^{l-1}\frac{(2q-1)^{\Ham{i}}}{\overline{w}-w(2q-1)^{\Ham{i}}}\,.
\end{equation}
Second, by differentiating the last equality with respect to $q$ we find
$$
\overline{w}\sum_{i=1}^{l-1}\frac{2\Ham{i}(2q-1)^{\Ham{i}-1}}{(\overline{w}-s(2q-1)^{\Ham{i}})^2}=\overline{w}'\sum_{i=0}^{l-1}\frac{(2q-1)^{\Ham{i}}}{(\overline{w}-s(2q-1)^{\Ham{i}})^2}\,.
$$
If for $q_0$ it holds that $\overline{w}'(q_0)=0$, then
$$
\sum_{i=1}^{l-1}\frac{2\Ham{i}(2q-1)^{\Ham{i}-1}}{(\overline{w}-s(2q-1)^{\Ham{i}})^2}=0.
$$
Due to the fact that for $0.5\leq q_0\leq 1$ the right hand side of the last equality is positive, we independently of Theorem \ref{theor:main:1} proved
\begin{proposition}In the case of the single peaked landscape \eqref{eq5:1} the value of $q$ for which $\overline{w}(q)$ is minimal satisfies
$$
0<q<0.5.
$$
For $q\geq 0.5$ we have that $\overline{w}'(q)>0$.
\end{proposition}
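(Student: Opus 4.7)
My plan is to work directly from the implicit equation for $\overline{w}(q)$ that was derived in the paragraph immediately preceding the proposition. Recall that by passing to the $\bs Q$-eigenbasis and using $\overline{w}-w=sp_k$, the single peaked landscape case reduces to the scalar transcendental equation
\begin{equation*}
1=\frac{s}{2^{N}}\sum_{i=0}^{l-1}\frac{(2q-1)^{\Ham{i}}}{\overline{w}-w(2q-1)^{\Ham{i}}},
\end{equation*}
which defines $\overline{w}=\overline{w}(q)$ implicitly on $[0,1]\setminus\{1/2\}$ (with the value at $q=1/2$ handled by continuity through Table~\ref{tab:main:2}). The first step is just to verify that the leading eigenvalue branch is indeed smooth for $q>0$, so that implicit differentiation is legitimate; this follows from the simplicity of $\overline{w}(q)$ as the Perron root of the irreducible matrix $\bs{QW}$ for $0<q<1$.

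Next I would differentiate the equation above with respect to $q$. The $q$-dependence enters both through the explicit $(2q-1)^{\Ham{i}}$ factors and through $\overline{w}(q)$ itself, and a routine calculation collects the terms into an identity of the form
\begin{equation*}
\overline{w}\sum_{i=1}^{l-1}\frac{2\Ham{i}(2q-1)^{\Ham{i}-1}}{\bigl(\overline{w}-w(2q-1)^{\Ham{i}}\bigr)^{2}}=\overline{w}'\sum_{i=0}^{l-1}\frac{(2q-1)^{\Ham{i}}}{\bigl(\overline{w}-w(2q-1)^{\Ham{i}}\bigr)^{2}}.
\end{equation*}
The $i=0$ term is absent on the left because $\Ham{0}=0$. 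This is precisely the identity appearing in the excerpt just above the proposition, so the bulk of the technical work is already done; my job is just to extract the sign conclusion cleanly.

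The third and decisive step is the sign analysis for $q\ge 1/2$. For such $q$, every factor $(2q-1)^{\Ham{i}-1}$ on the left hand side is nonnegative (and strictly positive for $q>1/2$), while $\Ham{i}\ge 1$ in the range of summation; moreover $\overline{w}(q)>w(2q-1)^{\Ham{i}}$ because $\overline{w}(q)\ge \overline{w}(1/2)=w+s/2^{N}>w\ge w(2q-1)^{\Ham{i}}$. Hence the entire left hand side is strictly positive for $q>1/2$. The coefficient of $\overline{w}'$ on the right hand side is also strictly positive (again by the lower bound $\overline{w}>w(2q-1)^{\Ham{i}}$, and because the $i=0$ term alone contributes a positive quantity). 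Dividing, we conclude $\overline{w}'(q)>0$ for all $q>1/2$. The boundary case $q=1/2$ is handled by Proposition~\ref{prop4:3}, which gives $\overline{w}'(1/2)\ge 0$, with equality ruled out here because the single peaked $\bs w$ does not satisfy $(N\bs E-2\bs H)\bs w=0$ (a short direct check: the $k$-th coordinate of this vector is $-s\sum_i H_{ki}$ which is nonzero).

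The only step I expect to require genuine care is the sign analysis: one has to confirm that $\overline{w}(q)-w(2q-1)^{\Ham{i}}>0$ uniformly in $i$ on $[1/2,1]$, so that the denominators do not blow up and do not flip sign, and that the $i=0$ term on the right really is strictly positive so the overall coefficient of $\overline{w}'$ cannot accidentally vanish. Both follow from the elementary bound $\overline{w}(q)\ge w+s2^{-N}>w$, but it is worth stating explicitly. Once $\overline{w}'(q)>0$ is established on $[1/2,1]$, combining with $\overline{w}'(0)=-N\overline{w}(0)<0$ from Table~\ref{tab:main:2} and continuity yields that the absolute minimum $\hat q$ lies strictly in $(0,1/2)$, completing the proposition.
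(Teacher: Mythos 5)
Your proposal follows essentially the same route as the paper: the paper likewise derives the implicit equation \eqref{eq5:5} for the single peaked landscape, differentiates it with respect to $q$ to obtain exactly the identity you display, and concludes that $\overline{w}'$ cannot vanish for $0.5\leq q\leq 1$ because the sum not involving $\overline{w}'$ is strictly positive there.

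Two local repairs are needed in your sign analysis, though neither changes the outcome. First, your justification of the denominator bound via $\overline{w}(q)\geq\overline{w}(1/2)$ is circular: it presupposes precisely the monotonicity on $[1/2,1]$ that you are trying to prove. The bound you actually need, $\overline{w}(q)>w\geq w(2q-1)^{\Ham{i}}$, follows non-circularly from the relation $\overline{w}-w=sp_k$ together with the strict positivity of the Perron eigenvector of the irreducible matrix $\bs{QW}$ for $0<q<1$; this is how the paper argues (it notes $\overline{w}=w+sp_0\geq w$, with equality excluded because the $k=0$ term in \eqref{eq6:1} would blow up). Citing the lower bound $\overline{w}\geq w+s2^{-N}$ is also legitimate but inverts the paper's logical order, since that bound is established later (Proposition \ref{prop:5:3}) by an independent series argument. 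Second, your auxiliary check that $(N\bs E-2\bs H)\bs w\neq 0$ miscomputes the coordinate: since every row of $N\bs E-2\bs H$ sums to zero, one has $(N\bs E-2\bs H)\bs w=s(N\bs E-2\bs H)\bs e_k$, whose $k$-th entry is $s(N-2H_{kk})=sN$, not $-s\sum_i H_{ki}$; the nonvanishing conclusion stands, and more directly Table \ref{tab:main:2} already records $\overline{w}'(0.5)=Ns^2/\bigl((2^Nw+s)2^{N-1}\bigr)>0$ for this landscape, so the boundary case $q=1/2$ needs no separate quadratic-form argument.
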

Of course, the last proposition is a simple corollary of the general Theorem \ref{theor:main:1}. Which is more important here, is that expression \eqref{eq5:5} can be used to find simple analytical expressions for upper and lower bounds on $\overline{w}(q)$ for $0\leq q\leq 1$ for the single peaked landscape.

\subsection{On the bounds for $\overline{w}(q)$ in the case of the single peaked landscape}Here we consider the case when the fittest sequence has type 0, i.e., consists of all 0s:
$$
\bs{w}^\top=(w+s,w,\ldots,w).
$$
Rewrite equation \eqref{eq5:5} in the form
\begin{equation}\label{eq6:1}
\frac{s}{2^N}\sum_{k=0}^{N}\frac{\binom{N}{k}(2q-1)^k}{\overline{w}-w(2q-1)^{k}}=1.
\end{equation}
For the minimal value of the mean fitness $\hat{w}=\overline{w}(\hat{q})$ it holds that $\hat{w}>w$ due to
$$
\overline{w}=w+sp_0\geq w,
$$
and the equality $\hat{w}=w$ would contradict \eqref{eq6:1} (the term for $k=0$ would tend to infinity). This means that for all $q\in[0,1]$
$$
0<\frac{w}{\overline{w}}\leq \frac{w}{\hat{w}}<1.
$$

Introduce the notation $t=2q-1$, therefore $t\in[-1,1]$. For this new variable $t$, which is more convenient for the subsequent computations, \eqref{eq6:1} takes the form
\begin{equation}\label{eq6:2}
    \frac{s}{\overline{w}}\sum_{k=0}^N\frac{\binom{N}{k}}{2^N}\cdot \frac{t^k}{1-\frac{w}{\overline{w}}t^k}=\frac{s}{\overline{w}}\sum_{k=0}^Nt^k\sum_{j=0}^\infty \left(\frac{w}{\overline{w}t^k}\right)^j\frac{\binom{N}{k}}{2^N}=1,
\end{equation}
or, exchanging the order of the sums (this can be done due to the absolute convergence of all the series involved):
$$
\sum_{j=0}^\infty \left(\frac{w}{\overline{w}}\right)^{j+1}\sum_{k=0}^N\frac{\binom{N}{k}}{2^N}\left(t^{j+1}\right)^k=\frac{w}{s}\,.
$$
The binomial formula yields
$$
\sum_{k=0}^N\frac{\binom{N}{k}}{2^N}\left(t^{j+1}\right)^k=\left(\frac{1+t^{j+1}}{2}\right)^N.
$$
Introducing $m=j+1$, we finally obtain
\begin{equation}\label{eq6:3}
    \sum_{m=1}^\infty \left(\frac{w}{\overline{w}}\right)^m\left(\frac{1+t^{m}}{2}\right)^N=\frac{w}{s}\,,\quad t=2q-1,\,-1\leq t\leq 1.
\end{equation}

First consider the case $-1\leq t\leq 0$, which corresponds to $0\leq q\leq 0.5$. Recall that the minimum value of $\overline{w}$ is attained in this interval. The following estimates hold in this interval:
\begin{align*}
\left(\frac{1+t^{m}}{2}\right)^N &\geq \left(\frac{1-|t|}{2}\right)^N\geq 0\\
0\leq \left(\frac{1+t^{m}}{2}\right)^N&\leq \left(\frac{1+|t|}{2}\right)^N,\quad \text{if }m\text{ is even,}\\
0\leq \left(\frac{1+t^{m}}{2}\right)^N&\leq \left(\frac 12\right)^N,\quad \text{if }m\text{ is odd,}
\end{align*}
The first of these inequalities together with \eqref{eq6:3} imply
$$
\frac{w}{s}\geq \left(\frac{1-|t|}{2}\right)^N\sum_{m=1}^\infty\left(\frac{w}{\overline{w}}\right)^m=\left(\frac{1-|t|}{2}\right)^N \frac{w}{\overline{w}-w}\,,
$$
from where a lower bound
\begin{equation}\label{eq6:4}
    \overline{w}\geq w+s\left(\frac{1-|t|}{2}\right)^N
\end{equation}
follows.

On the other hand, using two other inequalities, we get
$$
\frac{w}{s}\leq \left(\frac{1+|t|}{2}\right)^N \sum_{j=1}^\infty \left(\frac{w}{\overline{w}}\right)^{2j}+\left(\frac 12\right)^N \sum_{j=1}^\infty\left(\frac{w}{\overline{w}}\right)^{2j-1},
$$
or
$$
\frac{w}{s}\leq \left(\frac{1+|t|}{2}\right)^N \frac{w^2}{\overline{w}^2-w^2}+\left(\frac 12\right)^N \frac{w\overline{w}}{\overline{w}^2-w^2}\,.
$$
By solving this quadratic inequality we find an upper bound on $\overline{w}$:
\begin{equation}\label{eq6:5}
    \overline{w}\leq \sqrt{w^2+sw\left(\frac{1+|t|}{2}\right)^N+\left(\frac{s}{2^{N+1}}\right)^2}+\frac{s}{2^{N+1}}\,.
\end{equation}
Returning to the original mutation probability $q$, we obtain
\begin{proposition}\label{prop:5:2}Consider the Eigen quasispecies model with the single peaked landscape $\bs w^\top=(w+s,w,\ldots,w)$ for $w,s>0$. If $0\leq q\leq 0.5$ then the mean fitness $\overline{w}(q)$ satisfies
\begin{equation}\label{eq6:6}
    w+sq^N\leq \overline{w}(q)\leq \sqrt{w^2+sw(1-q)^N+s^22^{-2(N+1)}}+s2^{-N-1}.
\end{equation}
\end{proposition}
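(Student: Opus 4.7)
The plan is to reduce the eigenvalue problem, in the single peaked case, to a scalar characteristic equation in $\overline{w}$, rewrite it as a double series that the binomial theorem evaluates in closed form, and then bound the resulting series term-by-term on the interval $-1\leq t\leq 0$ where $t=2q-1$.

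First I would use the rank-one structure of the fitness matrix. Writing $\bs{W}=w\bs{I}+s\bs{E}_{0,0}$, the relation $\overline{w}=w+sp_0$ converts the eigenvalue equation into $w\bs{Q}\bs{p}+(\overline{w}-w)\bs{Q}_0=\overline{w}\bs{p}$, where $\bs{Q}_0$ is the zeroth column of $\bs{Q}$. Passing to the eigenbasis of $\bs{Q}$ via $\bs{T}$ and using $x_0=2^{-N}$, the zeroth coordinate yields the scalar equation
$$
\frac{s}{2^N}\sum_{k=0}^{N}\binom{N}{k}\frac{(2q-1)^k}{\overline{w}-w(2q-1)^k}=1.
$$
Since $\overline{w}=w+sp_0>w$, one has $w/\overline{w}<1$, so each factor $\bigl(1-(w/\overline{w})t^k\bigr)^{-1}$ admits an absolutely convergent geometric expansion. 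Substituting this expansion and swapping the two summations (justified by absolute convergence) collapses the inner sum by the binomial theorem and produces the master identity
$$
\sum_{m=1}^{\infty}\left(\frac{w}{\overline{w}}\right)^{m}\left(\frac{1+t^{m}}{2}\right)^{N}=\frac{w}{s},\qquad t=2q-1.
$$

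With this identity in hand, I would restrict to $-1\leq t\leq 0$ and bound $(1+t^m)/2$ from both sides. For the lower bound, $\frac{1+t^m}{2}\geq \frac{1-|t|}{2}=q$ for every $m$, so the identity gives $w/s\geq q^N\cdot(w/\overline{w})/(1-w/\overline{w})$, which rearranges to $\overline{w}\geq w+sq^N$. For the upper bound I would exploit the parity of $m$: for even $m$, $t^m=|t|^m\leq |t|$ gives $\frac{1+t^m}{2}\leq\frac{1+|t|}{2}=1-q$; for odd $m$, $t^m\leq 0$ gives $\frac{1+t^m}{2}\leq 2^{-1}$. Summing the two geometric tails separately yields
$$
\frac{w}{s}\leq (1-q)^{N}\,\frac{w^{2}}{\overline{w}^{2}-w^{2}}+2^{-N}\,\frac{w\overline{w}}{\overline{w}^{2}-w^{2}},
$$
which, after clearing denominators, becomes a quadratic inequality $\overline{w}^{2}-s\,2^{-N}\overline{w}-sw(1-q)^{N}-w^{2}\leq 0$; solving for $\overline{w}$ produces the stated upper bound.

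The main obstacle I anticipate is the step that assembles the master identity: carefully justifying the swap of summations (which in turn rests on $w/\overline{w}<1$, hence on positivity of $p_0$), and recognising that the inner finite sum is precisely the binomial expansion of $\bigl((1+t^m)/2\bigr)^{N}$. Once that identity is established, both bounds follow from elementary estimates and the remaining work is routine algebra to resolve the geometric sums and invert the quadratic inequality.
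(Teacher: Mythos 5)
Your proposal is correct and follows essentially the same route as the paper's own proof: the same rank-one reduction to the scalar equation $\frac{s}{2^N}\sum_{k=0}^{N}\binom{N}{k}\frac{(2q-1)^k}{\overline{w}-w(2q-1)^k}=1$, the same geometric-expansion-plus-binomial collapse to the master identity $\sum_{m=1}^{\infty}\left(\frac{w}{\overline{w}}\right)^{m}\left(\frac{1+t^{m}}{2}\right)^{N}=\frac{w}{s}$, and the same parity-split term-by-term bounds on $-1\leq t\leq 0$ resolved through the quadratic inequality. The only point you flag as delicate, the interchange of summations, is handled exactly as in the paper via $w/\overline{w}<1$ (which the paper justifies by noting that $\overline{w}=w$ would make the $k=0$ term blow up), so there is no gap.
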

Now consider $0\leq t\leq 1$. Here we have
\begin{align*}
\left(\frac{1+t^{m}}{2}\right)^N &\geq \left(\frac 12\right)^N\geq 0,\\
0\leq \left(\frac{1+t^{m}}{2}\right)^N &\leq \left(\frac{1+t^{2}}{2}\right)^N ,\quad\text{if }m\text{ is even,}\\
0\leq \left(\frac{1+t^{m}}{2}\right)^N &\leq \left(\frac{1+t}{2}\right)^N ,\quad\text{if }m\text{ is odd.}
\end{align*}
The first of these inequalities together with \eqref{eq6:3} imply
$$
\frac{w}{s}\geq \left(\frac{1}{2}\right)^N\sum_{m=1}^\infty\left(\frac{w}{\overline{w}}\right)^m= \left(\frac{1}{2}\right)^N \frac{w}{\overline{w}-w}\,.
$$
From where
$$
\overline{w}\geq w+s2^{-N}.
$$
On the other hand, two other inequalities and \eqref{eq6:3} imply
$$
\frac{w}{s}\leq \left(\frac{1+t^2}{2}\right)^N \sum_{j=1}^\infty \left(\frac{w}{\overline{w}}\right)^{2j}+\left(\frac{1+t}{2}\right)^N \sum_{j=1}^\infty\left(\frac{w}{\overline{w}}\right)^{2j-1},
$$
which yields
$$
\frac{w}{s}\leq \left(\frac{1+t^2}{2}\right)^N \frac{w^2}{\overline{w}^2-w^2}+\left(\frac{1+t}{2}\right)^N \frac{w\overline{w}}{\overline{w}^2-w^2}\,.
$$
By solving the quadratic inequality, we find
$$
\overline{w}\leq \sqrt{w^2+sw \left(\frac{1+t^2}{2}\right)^N+\frac{s^2}{4}\left(\frac{1+t}{2}\right)^{2N}}+\frac{s}{2}\left(\frac{1+t}{2}\right)^N.
$$
Returning to the original variable $q$, we conclude
\begin{proposition}\label{prop:5:3}Consider the Eigen quasispecies model with the single peaked landscape $\bs w^\top=(w+s,w,\ldots,w)$ for $w,s>0$. If $0.5\leq q\leq 1$ then the mean fitness $\overline{w}(q)$ satisfies
\begin{equation}\label{eq6:7}
    w+s2^{-N}\leq \overline{w}(q)\leq \sqrt{w^2+ sw(2q^2-2q+1)^N+\left(\frac{sq^N}{2}\right)^2}+\frac{sq^{N}}{2}\,.
\end{equation}
\end{proposition}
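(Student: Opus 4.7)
The plan is to extract both inequalities directly from the key identity (6.3), namely
$$\sum_{m=1}^\infty \left(\frac{w}{\overline{w}}\right)^m \left(\frac{1+t^m}{2}\right)^N = \frac{w}{s},\quad t=2q-1,$$
specialized to the range $0\leq t\leq 1$ corresponding to $0.5\leq q\leq 1$. The only ingredient beyond (6.3) is the elementary observation that on this interval the factor $\bigl((1+t^m)/2\bigr)^N$ admits a trivial uniform lower bound $(1/2)^N$ (since $t^m\geq 0$) and parity-dependent upper bounds: $t^m\leq t$ for odd $m\geq 1$, and $t^m\leq t^2$ for even $m\geq 2$. This mirrors the strategy used in the preceding proof of Proposition~\ref{prop:5:2} for $-1\leq t\leq 0$.

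For the lower bound, I would substitute $(1/2)^N\leq \bigl((1+t^m)/2\bigr)^N$ into (6.3) and sum the resulting geometric series $\sum_{m=1}^\infty (w/\overline{w})^m = w/(\overline{w}-w)$ (which converges since $\overline{w}>w$, as was already established before (6.2)). This yields
$$\frac{w}{s}\geq \left(\frac{1}{2}\right)^N\frac{w}{\overline{w}-w},$$
which rearranges to $\overline{w}\geq w+s\,2^{-N}$.

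For the upper bound, I would split the series in (6.3) into its odd and even indexed parts (legitimate by absolute convergence) and apply the parity-dependent upper bounds, obtaining
$$\frac{w}{s}\leq \left(\frac{1+t}{2}\right)^N \sum_{j=1}^\infty\left(\frac{w}{\overline{w}}\right)^{2j-1} + \left(\frac{1+t^2}{2}\right)^N\sum_{j=1}^\infty\left(\frac{w}{\overline{w}}\right)^{2j}.$$
Evaluating the two geometric sums as $w\overline{w}/(\overline{w}^2-w^2)$ and $w^2/(\overline{w}^2-w^2)$ and clearing denominators produces a quadratic inequality in $\overline{w}$:
$$\overline{w}^2-s\left(\frac{1+t}{2}\right)^N\overline{w}-w^2-sw\left(\frac{1+t^2}{2}\right)^N\leq 0.$$
Solving via the quadratic formula and keeping the relevant (positive) root gives the stated bound, once one substitutes $t=2q-1$ and uses $(1+t)/2=q$ and $(1+t^2)/2=2q^2-2q+1$.

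The calculation is almost entirely parallel to the $0\leq q\leq 0.5$ case; the main obstacle, such as it is, is the bookkeeping in the quadratic step and the selection of the correct branch of the square root, for which one should note that $\overline{w}>w\geq 0$ so only the root with the plus sign is admissible. One should also verify that for $q\in[0.5,1]$ the trivial lower bound $\overline{w}\geq (w+s)q^N$ mentioned in (\ref{main:5}) in fact coexists with $w+s2^{-N}$ by the general monotonicity argument $\bs{QWp}=\overline{w}\bs p\Rightarrow \overline{w}\geq q^N\max\{w_i\}$, which follows from iterating the Perron--Frobenius lower bound against the master component; this gives the maximum displayed in (\ref{main:5}).
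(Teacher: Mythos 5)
Your proposal is correct and follows essentially the same route as the paper: the paper likewise specializes the identity \eqref{eq6:3} to $0\leq t\leq 1$, applies exactly the uniform lower bound $\bigl((1+t^m)/2\bigr)^N\geq 2^{-N}$ and the parity-split upper bounds $t^m\leq t$ (odd) and $t^m\leq t^2$ (even), sums the geometric series, and solves the same quadratic inequality to obtain \eqref{eq6:7}. Your closing remark about $\overline{w}(q)\geq (w+s)q^N$ concerns Proposition~\ref{prop:5:4} rather than this statement (the paper derives it from \eqref{eq6:3} via Jensen's inequality, though your direct Perron--Frobenius argument on the master component is also valid), so it is harmless but not needed here.
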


It is possible to obtain other lower and upper bounds for $\overline{w}(q)$ if $0.5\leq q\leq 1$. Note that $t^m$ is convex if $0\leq t\leq 1$. Then Jensen's inequality implies
$$
\frac{1+t^m}{2}\geq\left(\frac{1+t}{2}\right)^m=q^m.
$$
Therefore, from \eqref{eq6:3},
$$
\frac{w}{s}\geq \sum_{m=1}^\infty\left(\frac{wq^N}{\overline{w}}\right)=\frac{wq^N}{\overline{w}-wq^N}\,.
$$
From where we obtain a lower bound
\begin{equation}\label{eq6:8}
    \overline{w}(q)\geq (w+s)q^N, \quad 0.5\leq q\leq 1.
\end{equation}

Now to an upper bound. Since
$$
\frac{1+t^m}{2}\leq \frac{1+t^2}{2}\,,\quad m\geq 2,\quad 0\leq t\leq 1,
$$
then \eqref{eq6:3} implies
$$
\frac{w}{s}\leq \left(\frac{1+t}{2}\right)^N\frac{w}{\overline{w}}+\left(\frac{1+t^2}{2}\right)^N\sum_{m=2}^\infty\left(\frac{w}{\overline{w}}\right)=\frac{wq^N}{\overline{w}}+\frac{w^2(2q^2-2q+1)^N}{\overline{w}(\overline{w}-w)}\,,
$$
or
$$
\overline{w}-(w+sq^N)\overline{w}+ws\bigl(q^N-(2q^2-2q+1)^N\bigr)\leq 0.
$$
On solving this inequality, we find
\begin{equation}\label{eq6:11}
\overline{w}\leq \frac{w+sq^N}{2}+\sqrt{\left(\frac{w+sq^N}{2}\right)^2-ws\bigl(q^N-(2q^2-2q+1)^N\bigr)}.
\end{equation}
\begin{proposition}\label{prop:5:4}Consider the Eigen quasispecies model with the single peaked landscape $\bs w^\top=(w+s,w,\ldots,w)$ for $w,s>0$. If $0.5\leq q\leq 1$ then the mean fitness $\overline{w}(q)$ satisfies
\begin{equation}\label{eq6:7c}
   (w+s)q^N \leq \overline{w}(q)\leq \frac{w+sq^N}{2}+\sqrt{\left(\frac{w+sq^N}{2}\right)^2-ws\bigl(q^N-(2q^2-2q+1)^N\bigr)}.
\end{equation}
\end{proposition}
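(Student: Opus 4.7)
The natural starting point is the characteristic equation \eqref{eq6:3}, which after substituting $t=2q-1\in[0,1]$ reads
\begin{equation*}
\sum_{m=1}^{\infty}\left(\frac{w}{\overline{w}}\right)^{m}\left(\frac{1+t^{m}}{2}\right)^{N}=\frac{w}{s}.
\end{equation*}
Because we already know from Proposition~\ref{prop:5:3} (or from Perron--Frobenius) that $\overline{w}>w$ on the interior, the geometric series $\sum(w/\overline{w})^m$ converges and can be summed in closed form; the whole game is therefore to sandwich the factor $\bigl((1+t^m)/2\bigr)^N$ between two tractable expressions that sum term by term.

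For the \emph{lower} bound I would use the convexity of $s\mapsto s^{m}$ on $[0,1]$ for each $m\geq 1$. Jensen's inequality applied to the two-point average $\tfrac12\cdot 1+\tfrac12\cdot t$ gives
\begin{equation*}
\frac{1+t^{m}}{2}\ \geq\ \left(\frac{1+t}{2}\right)^{m}=q^{m}.
\end{equation*}
Raising to the $N$-th power and inserting into \eqref{eq6:3} turns the left-hand side into a single geometric series with ratio $wq^{N}/\overline{w}$, whose sum equals $wq^{N}/(\overline{w}-wq^{N})$. Imposing this lower bound against $w/s$ and solving the resulting linear inequality for $\overline{w}$ yields precisely $\overline{w}\geq (w+s)q^{N}$.

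For the \emph{upper} bound I would exploit monotonicity in the exponent: for $0\leq t\leq 1$ and $m\geq 2$ one has $t^{m}\leq t^{2}$, so
\begin{equation*}
\left(\frac{1+t^{m}}{2}\right)^{N}\leq \left(\frac{1+t^{2}}{2}\right)^{N}=(2q^{2}-2q+1)^{N}, \qquad m\geq 2,
\end{equation*}
while the $m=1$ term is left untouched as $\bigl((1+t)/2\bigr)^{N}=q^{N}$. Split the sum in \eqref{eq6:3} accordingly, sum the $m\geq 2$ geometric tail to $w^{2}/\bigl(\overline{w}(\overline{w}-w)\bigr)$, and clear denominators. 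This produces the quadratic inequality
\begin{equation*}
\overline{w}^{\,2}-(w+sq^{N})\,\overline{w}+ws\bigl(q^{N}-(2q^{2}-2q+1)^{N}\bigr)\leq 0,
\end{equation*}
whose larger root is exactly the claimed upper bound.

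The only delicate point is making sure the quadratic is genuinely being solved in the right direction and that the discriminant is nonnegative; the non-negativity of the discriminant follows because the product of roots $ws\bigl(q^{N}-(2q^{2}-2q+1)^{N}\bigr)$ involves the inequality $q^{N}\leq(2q^{2}-2q+1)^{N}$ only in a controlled way (both terms are sandwiched between $2^{-N}$ and $1$ on $[1/2,1]$), so I would check that the constraint $\overline{w}\geq w$ selects the $+$ sign in the quadratic formula. All the algebraic manipulations are routine once the two elementary inequalities on $\bigl((1+t^{m})/2\bigr)^{N}$ are in place; the main (very small) obstacle is simply the bookkeeping required to reassemble the terms into the exact form stated in \eqref{eq6:7c}.
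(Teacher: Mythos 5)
Your proposal is correct and follows essentially the same route as the paper's proof: both start from \eqref{eq6:3}, obtain the lower bound via Jensen's inequality $\frac{1+t^m}{2}\geq\left(\frac{1+t}{2}\right)^m=q^m$ applied termwise, and obtain the upper bound by estimating the $m\geq 2$ tail with $t^m\leq t^2$, keeping the $m=1$ term exact, and solving the resulting quadratic inequality $\overline{w}^2-(w+sq^N)\overline{w}+ws\bigl(q^N-(2q^2-2q+1)^N\bigr)\leq 0$. One small correction to your closing remark: on $[0.5,1]$ the relevant inequality runs the other way, namely $(2q^2-2q+1)^N\leq q^N$ (since $2q^2-2q+1-q=(2q-1)(q-1)\leq 0$ there), and in any case the discriminant is automatically nonnegative because the quadratic is $\leq 0$ at the real value $\overline{w}(q)$, so no separate check is needed.
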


Together with Proposition \ref{prop:5:3}, we therefore have that the lower bound for $\overline{w}$ satisfies
\begin{equation}\label{eq6:9}
    \overline{w}(q)\geq \max\left\{w+s2^{-N},(w+s)q^N\right\},\quad 0.5\leq q\leq 1,
\end{equation}
and a similar formula is valid for the upper bound. However, it can be shown that the upper bound in \eqref{eq6:7c} is better than the upper bound in \eqref{eq6:7}.

Propositions \ref{prop:5:2}, \ref{prop:5:3}, and \ref{prop:5:4} together provide the proof for Proposition \ref{prop:main:1}.

The point at which the two lower bounds \eqref{eq6:9} intersect is given by
\begin{equation}\label{eq6:10}
    q^{\ast}=\sqrt[N]{1-\frac{s(1-2^{-N})}{w+s}}\,,
\end{equation}
which provides an estimate for the error threshold in the case of the single peaked landscape (see Section \ref{sec:3:2} for the discussion). Note that this formula can be written as
\begin{equation}\label{eq6:10a}
    q^\ast=\sqrt[N]{\frac{\overline{w}(0.5)}{\overline{w}(1)}}\,,
\end{equation}
which serves as a starting point for the conjecture discussed in Section \ref{sec:3:3}.

\subsection{A general remark on $\overline{w}(q)$ for an arbitrary fitness landscape}
Finally, we relate the specific results from Section \ref{sec:5} with an arbitrary fitness landscape $\bs w$.

Consider again the eigenvalue problem
\begin{equation*}
\bs{QWp}=\overline{w}\bs p,\tag{\ref{eq3:1}}
\end{equation*}
and matrix
$$
\sqrt{\bs W}=\diag(\sqrt{w_0},\ldots,\sqrt{w_{l-1}}).
$$
Rewrite \eqref{eq3:1} in the form
$$
\sqrt{\bs W}\bs Q\sqrt{\bs W}\sqrt{\bs W}\bs p=\overline{w}\sqrt{\bs W}\bs p,
$$
and denote $\bs s=\sqrt{\bs W}\bs p$, and $\bs C=\sqrt{\bs W}\bs Q\sqrt{\bs W}$. Vector $\bs s\in\R^l$ can be normalized as
$$
\bs s\cdot \bs s=1.
$$
We also have $\bs s\geq 0,\,\bs C\geq 0$ and $\bs C$ is symmetric. Then instead of \eqref{eq3:1} we have
$$
\bs{Cs}=\overline{w}\bs s,
$$
and hence
$$
\overline{w}=\bs s\cdot\bs {Cs}.
$$
Since $\bs s$ in nonnegative, the last expression can be rewritten as
$$
\overline{w}=\max\{\bs z\cdot \bs{Cz}\mid \bs z\geq 0,\,\bs z\cdot \bs z=1\}.
$$
Together with $\bs W$ consider another fitness landscape $\bs{\tilde{W}}$ such that $\bs{\tilde{W}}\geq \bs W$. Then, since $\bs Q\geq 0$, we have
$$
\bs{\tilde{C}}=\sqrt{\bs{\tilde{W}}}\bs Q\sqrt{\bs{\tilde{W}}}\geq \sqrt{\bs{{W}}}\bs Q\sqrt{\bs{{W}}}=\bs C\geq 0.
$$
This means that for any vector $\bs z\in \R^l$ such that $\bs z\geq 0$, $\bs z\cdot\bs z=1$
$$
\bs z\cdot \bs{\tilde{C}}\bs z\geq \bs z\cdot \bs{{C}}\bs z,
$$
or in terms of the leading eigenvalues $\overline{w}$:
$$
\tilde{\overline{w}}=\max\{\bs z\cdot \bs{\tilde{C}z}\mid \bs z\geq 0,\,\bs z\cdot \bs z=1\}\geq \max\{\bs z\cdot \bs{Cz}\mid \bs z\geq 0,\,\bs z\cdot \bs z=1\}= \overline{w}
$$
for any $q\in[0,1]$. This proves Proposition \ref{prop:3:3}.

In particular, it is true if $w_0=w+s$ is the maximal fitness of $\bs w$, $w=\max_{i\geq 1}\{w_i\}$, and $\bs{\tilde{w}}=\diag(w+s,w,\ldots,w)$.
\appendix
\section{Appendix}
\subsection{Explicit solution of the Eigen model}\label{append:A}
The Kronecker product, considered in Section \ref{propQ}, allows explicit solution for the multiplicative fitness landscape, which was treated in \cite{baake1997ising,higgs1994error,rumschitzki1987spectral}, and which we present here for future references.

Recall that for $\bs Q_N$ we have a natural recursive procedure
$$
\bs Q_N=\bs Q_1\otimes \bs Q_{N-1}, \quad N=1,\ldots,
$$
and
$$
\bs Q_1=
\begin{bmatrix}
  q & 1-q \\
  1-q & q \\
\end{bmatrix}.
$$
This natural recursive procedure allows to write down the eigenvalues and eigenvectors of $\bs Q_N$. If we were able to represent matrix $\bs{QW}$ in a similar form, then it would mean that we can find the leading eigenvalue (mean fitness) and the corresponding eigenvector (quasispecies). It turns out that if the fitness landscape is multiplicative, it is always possible to do. To wit, consider sequence type $i$:
$$
[a_0,a_1,\ldots,a_{N-1}],\quad a_{k}=\{0,1\}.
$$
and let $0<r_k<1$ be the contribution of the $k$-th site, $k=1,\ldots,N$ if there is ``1'' at the $k$-th site. Now define the fitness of sequence $i$ as
$$
w_i=\prod_{k\colon a_k=1}r_{k+1}, \quad i=1,\ldots,l-1, w_0=1,
$$
so that $r_k$ is a selective disadvantage to have the ``wrong'' letter at the $k$-th position.

Define matrices
$$
\bs W_k=\begin{bmatrix}
          1 & 0 \\
          0 & r_{k} \\
        \end{bmatrix},\quad k=1,\ldots,N.
$$
Now we have that
$$
(\bs Q\bs W)_k=(\bs Q_1\bs W_k)\otimes (\bs{QW})_{k-1},\quad k=1,\ldots,N,
$$
where $(\bs{QW})_k$ means the system matrix at the $k$-th step, and $(\bs Q\bs W)_1=\bs Q_1\bs W_1$. Now, since we know the maximal eigenvalue and the corresponding eigenvector for each $\bs Q_1\bs W_k$:
$$
\lambda_k=\frac 12 \left(q(1+r_k)+\sqrt{q^2(1+r_k)-4r_k(2q-1)}\right),
$$
$$
\bs v_k^\top=\left(1,C_k\right),\quad C_k=\frac{q(1-r)-\sqrt{q^2(1+r_k)-4r_k(2q-1)}}{2(q-1)}\,,
$$
the mean fitness is
$$
\overline{w}=\prod_{k=1}^N\lambda_k,
$$
and the quasispecies distribution (without the normalization) is
$$
\bs p=\bs v_N\otimes \bs v_{N-1} \otimes\ldots \otimes \bs v_1.
$$
This particular representation can be generalized to the cases when there are more than 2 possible letters per site, or when the mutation probabilities are site-dependent, to give a false sense of generality, but will work only for the multiplicative fitness landscape, i.e., biologically speaking, only for the case of no epistasis, which is not the most realistic case.
\subsection{Another approach to look for $\hat{q}$}\label{append:B}
Recall that we proved that there always exists the absolute minimum of $\overline{w}(q)$, which is achieved at $\hat{q}$ such that $\overline{w}'(\hat{q})=0$. In this appendix we present a procedure, which can be used to calculate $\hat{q}$ for small values of $N$.

Consider the characteristic equation
\begin{equation}\label{eq:B:1}
    \det(\bs{QW}-\lambda \bs I)=0,
\end{equation}
or, in explicit form
\begin{equation}\label{eq:B:2}
    c(\lambda)=\lambda^{2^N}+s_1(q,\bs w)\lambda^{l-1}+\ldots+s_{2^N}(q,\bs w)=0.
\end{equation}
Here $s_k(q,\bs w)$ are polynomials of $q$ depending on $\bs w$. We differentiate \eqref{eq:B:2} with respect to $q$ and obtain
\begin{equation}\label{eq:B:3}
    \left(2^N\lambda ^{l-1}+s_1(q,\bs w)(l-1)\lambda^{2^N-2}+\ldots\right)\lambda'+s_1'(q,\bs w)\lambda^{l-1}+\ldots+s'_{2^N}(q,\bs w)=0.
\end{equation}

If $q=\hat{q}$ then $\lambda'(q)=0$ and $\lambda=\hat{\lambda}$, and \eqref{eq:B:3} turns into
$$
c_1(\hat{\lambda})=s_1'(\hat q,\bs w)\hat{\lambda}^{l-1}+\ldots+s'_{2^N}(\hat q,\bs w)=0.
$$
Since $c(\lambda)$ and $c_1(\lambda)$ have a common root $\hat{\lambda}$ at $q=\hat{q}$, then their resultant
$$
R(q)=\text{Res}\,\bigl(c(\lambda),c_1(\lambda)\bigr)
$$
satisfies
$$
R(\hat{q})=0.
$$
The last expression can be used to determine $\hat{q}$ for modest values of $N$.
\begin{example}Let $N=2,l=2^N=4$ and
$$
\bs w=(1,4,9,16)^\top.
$$
We have that (Table \ref{tab:main:1})
$$
\overline{w}(0)=6,\quad \overline{w}(0.5)=7.5,\quad \overline{w}(1)=16,\quad \overline{w}'(0)=-2\overline{w}(0)=-12,\quad \overline{w}'(1)=2\overline{w}(1)=32.
$$
Direct calculations show that resultant $R(q)$ is
$$
R(q)=(2q-1)^7P_{10}(q),
$$
where $P_{10}(q)$ is some polynomial degree 10, that has 4 pairs of complex conjugate roots, one negative root, and one positive root $\hat{q}\approx 0.09317$, which is the sought minimum.
\end{example}

\subsection{Another approach to calculate $\overline{w}''(q)$}\label{append:C} In Section \ref{sec:4:5} we presented a formula for the second derivative:
$$
\bs{w}''=2 \bs z'\cdot(\overline{w}\bs I-\bs L)\bs z'+\bs z\cdot \bs{L}''\bs z,
$$
which uses for calculation the eigenvector $\bs z$ and its derivative. Here we show how we can find $\overline{w}''(q)$ through all the eigenvalues of $\bs{QW}$, or, equivalently, of $\bs{DF}$. We know that all the eigenvalues of $\bs{DF}$ are real, $\lambda_i\in\R,\,i=0,\ldots,l-1$. Assume that they are in decreasing order, such that $\overline{w}=\lambda_0$. Additionally assume that they all are different, at least for fixed $q=q_0$.

Consider again
$$
\bs{DFx}=\lambda\bs x,\quad \bs{FDy}=\lambda\bs y.
$$
Let $\bs x_0,\ldots, \bs x_{l-1}$ be the linearly independent eigenvectors of $\bs{DF}$ (they are real and always exist) and $\bs y_0,\ldots,\bs y_{l-1}$ be the corresponding eigenvectors of $\bs{FD}$ that satisfy, for fixed $q=q_0$,
$$
\bs x_i\cdot \bs y_j=\delta_{ij},
$$
where $\delta_{ij}$ is the Kronecker delta. Without loss of generality, we can assume that
\begin{equation}\label{eq:C:1}
    \bs y_0(q_0)\cdot \bs x_0(q)=1
\end{equation}
for $q$ close to $q_0$. Differentiating \eqref{eq:C:1} and plugging $q=q_0$, we find
$$
\bs y_0\cdot \bs x_0'=0,\quad \bs y_0\cdot \bs x_0''=0
$$
at the point $q=q_0$.
\begin{proposition}Given the conditions specified above and for $q\neq 0.5$, one has
$$
\left(q-\frac 12\right)^2\lambda_0''=\lambda_0\left(\bs y_0\cdot (\Delta^2-\Delta)\bs x_0+2\sum_{j=1}^{l-1}\frac{\lambda_j}{\lambda_0-\lambda_j}\bs (\bs y_j\cdot \Delta\bs x_0)(\bs y_0\cdot\Delta\bs x_j)\right),
$$
where $\Delta=\diag(0,1,\ldots,H_i,\ldots,N)$.
\end{proposition}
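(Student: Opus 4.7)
The plan is to follow the same strategy used in Section~\ref{sec:4:5} (differentiate the eigenvalue equation, use biorthogonality, and pick out $\lambda_0''$ by pairing with the left eigenvector $\bs y_0$), but now I also resolve the first-order perturbation $\bs x_0'$ by expanding it in the eigenbasis $\{\bs x_j\}$ so that the final answer is expressed only in terms of the spectral data of $\bs{DF}$ (together with the fixed matrix $\Delta$). Since $\bs F = \bs T^{-1}\bs W\bs T$ is independent of $q$, the $q$-dependence of $\bs{DF}$ sits entirely in $\bs D$, and from Section~\ref{sec:4:5} we already have the identities $\bs D' = \tfrac{2}{2q-1}\bs D\Delta$ and $\bs D'' = \tfrac{4}{(2q-1)^2}\bs D(\Delta^2-\Delta)$.

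First I would differentiate $\bs{DF}\bs x_0 = \lambda_0\bs x_0$ twice with respect to $q$, take the inner product with $\bs y_0$, and use $\bs{FD}\bs y_0 = \lambda_0\bs y_0$ to eliminate the terms containing $\bs x_0''$. The normalization $\bs y_0(q_0)\cdot\bs x_0(q) = 1$ forces $\bs y_0\cdot\bs x_0' = 0$ and $\bs y_0\cdot\bs x_0'' = 0$ at $q=q_0$, which also kills the term $2\lambda_0'\,\bs y_0\cdot\bs x_0'$. The outcome is the compact identity
\begin{equation*}
\lambda_0'' \;=\; \bs y_0\cdot\bs D''\bs F\bs x_0 \;+\; 2\,\bs y_0\cdot\bs D'\bs F\bs x_0'.
\end{equation*}

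Next, for $q\neq 1/2$ the relation $\bs F\bs x_0 = \lambda_0\bs D^{-1}\bs x_0$ together with the explicit form of $\bs D''$ reduces the first term immediately to $\tfrac{4\lambda_0}{(2q-1)^2}\bs y_0\cdot(\Delta^2-\Delta)\bs x_0$. For the second term I would write $\bs x_0' = \sum_{j=0}^{l-1}c_j\bs x_j$, with $c_0 = \bs y_0\cdot\bs x_0' = 0$; the remaining coefficients $c_j$ ($j\geq 1$) are found by pairing the first-derivative equation $\bs D'\bs F\bs x_0 + \bs{DF}\bs x_0' = \lambda_0'\bs x_0 + \lambda_0\bs x_0'$ with $\bs y_j$, which gives
\begin{equation*}
c_j \;=\; \frac{\bs y_j\cdot\bs D'\bs F\bs x_0}{\lambda_0 - \lambda_j} \;=\; \frac{2\lambda_0}{(2q-1)(\lambda_0-\lambda_j)}\,\bs y_j\cdot\Delta\bs x_0,
\end{equation*}
where I have again used $\bs F\bs x_0 = \lambda_0\bs D^{-1}\bs x_0$ and $\bs D'\bs D^{-1} = \tfrac{2}{2q-1}\Delta$.

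Substituting this expansion into $\bs y_0\cdot\bs D'\bs F\bs x_0'$, using $\bs F\bs x_j = \lambda_j\bs D^{-1}\bs x_j$ for each $j$, and again invoking $\bs D'\bs D^{-1} = \tfrac{2}{2q-1}\Delta$ yields
\begin{equation*}
\bs y_0\cdot\bs D'\bs F\bs x_0' \;=\; \frac{4\lambda_0}{(2q-1)^2}\sum_{j=1}^{l-1}\frac{\lambda_j}{\lambda_0-\lambda_j}\,(\bs y_j\cdot\Delta\bs x_0)(\bs y_0\cdot\Delta\bs x_j).
\end{equation*}
Adding the two contributions and multiplying through by $(q-1/2)^2 = (2q-1)^2/4$ produces the claimed formula. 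The main bookkeeping obstacle is making sure the pairing $(\bs y_j\cdot\Delta\bs x_0)(\bs y_0\cdot\Delta\bs x_j)$ comes out in the correct order after swapping $\bs D',\bs D^{-1}$ past each other and accounting for the fact that $\bs D$ and $\Delta$ commute (both diagonal) while $\bs F$ does not commute with either; beyond that, the argument is a routine second-order perturbation calculation, and the simple-spectrum assumption (at least at $q=q_0$) is what justifies the eigenbasis expansion and the nonvanishing of the denominators $\lambda_0-\lambda_j$.
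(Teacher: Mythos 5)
Your proposal is correct and follows essentially the same route as the paper's own proof: the identity $\lambda_0''=\bs y_0\cdot\bs D''\bs F\bs x_0+2\,\bs y_0\cdot\bs D'\bs F\bs x_0'$ via biorthogonality, the eigenbasis expansion of $\bs x_0'$ with coefficients $C_j=\frac{\bs y_j\cdot\bs D'\bs F\bs x_0}{\lambda_0-\lambda_j}$, and the reduction through $\bs F\bs x_j=\lambda_j\bs D^{-1}\bs x_j$ and $\bs D'\bs D^{-1}=\tfrac{2}{2q-1}\Delta$ are exactly the paper's steps. The only cosmetic difference is that you simplify the coefficients to the $\Delta$-form one step earlier than the paper does; the computation is otherwise identical and sound.
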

\begin{proof}
Since $\{\bs x_0,\ldots,\bs x_{l-1}\}$ is a basis in $\R^l$, we always can write
$$
\bs x_0'=C_0\bs x_0+\ldots+C_{l-1}\bs x_{l-1}.
$$
for some constants $C_j$. By multiplying this by $\bs y_0$ on the left and taking into account the normalizations, we find $C_0=0$, i.e.,
$$
\bs x_0'=C_1\bs x_1+\ldots+C_{l-1}\bs x_{l-1}.
$$
Differentiate $\bs {DFx}_0=\lambda_0\bs x_0$ with respect to $q$ implies
$$
\bs D'\bs F\bs x_0+\bs {DFx}_0'=\lambda_0'\bs x_0+\lambda_0\bs x_0'.
$$
By multiplying this equality by $\bs y_j$ we find
$$
\bs y_j\cdot \bs D'\bs F\bs x_0=(\lambda_0-\lambda_j)\bs y_j\cdot \bs x_0'=(\lambda_0-\lambda_j)C_j,
$$
and hence
$$
C_j=\frac{\bs y_j\cdot \bs D'\bs F\bs x_0}{\lambda_0-\lambda_j}\,,\quad j=1,\ldots, l-1.
$$
Now find the second derivative of $\bs{DFx}_0=\lambda_0\bs x_0$:
$$
\bs D''\bs{Fx}_0+2\bs D'\bs F\bs x_0'+\bs{DF}\bs x_0''=\lambda_0''\bs x_0+2\lambda'_0\bs x_0'+\lambda_0\bs x_0'',
$$
from where, after multiplication by $\bs y_j$ and some simplifications,
$$
\lambda_0''=\bs y_0\cdot \bs D''\bs F\bs x_0+2\bs y_0\cdot\bs D'\bs F\bs x_0'.
$$
Let $q\neq 0.5$, then $\bs F\bs x_0=\lambda_0\bs D^{-1}\bs x_0$, and using the expressions from Section \ref{sec:4:5} for $\bs D'$ and $\bs D''$, we obtain
$$
\bs y_0\cdot\bs D''\bs F\bs x_0=\lambda_0\bs y_0\cdot \bs D''\bs D^{-1}\bs x_0=\frac{4\lambda_0}{(2q-1)^2}\bs y_0\cdot (\Delta^2-\Delta)\bs x_0.
$$
Moreover, the representation of $\bs x'_0$ and the expressions for $C_j$ yield
$$
2\bs y_0\bs D'\bs F\bs x_0'=2\sum_{j=1}^{l-1}C_j\bs y_0\cdot \bs D'\bs F\bs x_j=2\sum_{j=1}^{l-1}\frac{(\bs y_j\cdot \bs D'\bs F\bs x_0)(\bs y_0\cdot\bs D'\bs F\bs x_j)}{\lambda_0-\lambda_s}\,.
$$
We have $\bs F\bs x_j=\lambda_j\bs D^{-1}\bs x_j$ and $\bs D'=\frac{2}{2q-1}\bs D\Delta$, and therefore
$$
(\bs y_j\cdot \bs D'\bs F\bs x_0)(\bs y_0\cdot\bs D'\bs F\bs x_j)=\frac{4\lambda_0\lambda_j(\bs y_j\cdot\Delta\bs x_0)(\bs y_0\cdot\Delta\bs x_j)}{(2q-1)^2}\,.
$$
Now if we plug everything into the expression for $\lambda''_0$, after some rearrangement we find the announced formula.
\end{proof}

\paragraph{Acknowledgements:} This research is supported in part by the Russian Foundation for Basic Research (RFBR) grant \#10-01-00374 and joint
grant between RFBR and Taiwan National Council \#12-01-92004HHC-a. ASN's research is supported in part by ND EPSCoR and NSF grant \#EPS-0814442.


\end{document}